\definecolor{ryan}{cmyk}{0.64,0,0.95,0.40}
\crefname{section}{§}{§§}
\Crefname{section}{§}{§§}
\numberwithin{equation}{section}
\numberwithin{figure}{section}
  \theoremstyle{plain}
  \newtheorem*{thm*}{\protect\theoremname}
\theoremstyle{plain}
\newtheorem{thm}{\protect\theoremname}
  \theoremstyle{definition}
  \newtheorem{defn}[thm]{\protect\definitionname}
  \theoremstyle{plain}
  \newtheorem{lem}[thm]{\protect\lemmaname}
  \providecommand{\definitionname}{Definition}
  \providecommand{\lemmaname}{Lemma}
  \providecommand{\theoremname}{Theorem}
\providecommand{\theoremname}{Theorem}
\newtheorem{prop}{Proposition}
\theoremstyle{remark}
\newtheorem*{remark}{Remark}
\begin{document}

\title{On the spectral properties of Feigenbaum graphs}

\author{Ryan Flanagan, Lucas Lacasa and Vincenzo Nicosia}
\address{School of Mathematical Sciences, Queen Mary University of London, E14NS London (UK)}

\begin{abstract}
A Horizontal Visibility Graph (HVG) is a simple graph extracted from an ordered sequence of real values, and this mapping has been used to provide a combinatorial encryption of time series for the task of performing network based time series analysis. While some properties of the spectrum of these graphs --such as the largest eigenvalue of the adjacency matrix-- have been routinely used as measures to characterise time series complexity, a theoretic understanding of such properties is lacking. In this work we explore some algebraic and spectral properties of these graphs associated to periodic and chaotic time series. We focus on the family of Feigenbaum graphs, which are HVGs constructed in correspondence with the trajectories of one-parameter unimodal maps undergoing a period-doubling route to chaos (Feigenbaum scenario). For the set of values of the map's parameter $\mu$ for which the orbits are periodic with period $2^n$, Feigenbaum graphs are fully characterised by two integers $n,k$ and admit an algebraic structure. We explore the spectral properties of these graphs for finite $n$ and $k$, and among other interesting patterns we find a scaling relation for the maximal eigenvalue and we prove some bounds explaining it. We also provide numerical and rigorous results on a few other properties including the determinant or the number of spanning trees. In a second step, we explore the set of Feigenbaum graphs obtained for the range of values of the map's parameter for which the system displays chaos. We show that in this case, Feigenbaum graphs form an ensemble for each value of $\mu$ and the system is typically weakly self-averaging. Unexpectedly, we find that while the largest eigenvalue can distinguish chaos from an iid process, it is not a good measure to quantify the chaoticity of the process, and that the eigenvalue density does a better job.

\end{abstract}

\keywords{visibility graphs, eigenvalues, networks, time series, Feigenbaum graph}

\maketitle

\section{Introduction}
In recent years, a great deal of attention has been devoted to the construction of graphs associated to time series, with the aims to make network based time series analysis \cite{PR_review}. Here we consider a specific method --horizontal visibility graphs-- by which an ordered sequence of $N$ real-valued data is transformed into a graph with $N$ nodes, whose edges are established among the $N$ nodes according to a given ordering criterion in the sequence \cite{PRE, PNAS}. While a great deal of effort has been paid to study properties of these graphs related to the degree sequence \cite{Luque2016, nonlinearity}, less attention has been paid to their spectral properties. Nevertheless, the so-called Graph Index Complexity (GIC) \cite{GIC}, a rescaled quantity of the maximal eigenvalue of the graph's adjacency matrix, has been proposed as a measure to characterise the complexity of the associated sequence, and has been used in several applications including detection of Alzheimer's disease \cite{Ahmadlou2010} or epilepsy \cite{epilepsy} among others \cite{GIC2, GIC3} or the discrimination between randomness and chaos \cite{Fioriti}. 
However, a basic theoretical understanding of the spectral properties of HVGs is still lacking. This is the main aim of this paper. To achieve this aim, we generate sequences (trajectories) from the logistic map, as this is a well-known map which generates both periodic and chaotic sequences, allowing us to explore spectral properties of HVGs associated to different classes of time series. In previous works, the HVG of a time series generated by the logistic map for a specific value of the parameter $\mu$ was coined as a \emph{Feigenbaum graph} \cite{Feig}. In a nutshell, Feigenbaum graphs are HVGs associated with the Feigenbaum scenario, where one-dimensional unimodal maps exhibit a period-doubling route to chaos. In this work we give a first look at some spectral properties of this family of graphs.

The rest of the paper goes as follows. In section \cref{sec:BC} we define and provide a basic characterisation of Feigenbaum graphs below and above the accumulation point. We show that below the accumulation point, Feigenbaum graphs are easily enumerable in terms of a two-parameter family of graphs which can be generated in terms of two graph operations and admit an algebraic structure. Such enumeration is not possible above the accumulation point, where we show that for particular values of the map's parameter we no longer have unique graphs but an ensemble of them. In sections \cref{sec:regular} and \cref{sec:chaos} we explore the spectral properties including the spectrum of the adjacency matrix --with special interest in the largest eigenvalue-- above and below the accumulation point. For the chaotic region, we finally compare the results to those associated with an iid process. In section \cref{sec:conclusion} we conclude.


\section{From Horizontal visibility graphs to Feigenbaum Graphs: Basic Characterisation} 
\label{sec:BC}

We start with a few definitions.

\begin{defn}\label{def:HVG}
(Horizontal visibility graph HVG). {\it Let ${\mathcal S}=\{x_{i}\}_{i=1}^N$ be an ordered sequence of $N$ real-valued data $x_i \in \mathbb{R} \  \forall i=1,\dots,N$. Then, the horizontal visibility graph (HVG) \cite{PRE} associated to ${\mathcal S}$ is an undirected graph
of $N$ ordered vertices (where a vertex with ordinal $i$ is related to datum $x_i$), such that
two vertices $i$ and $j$ share an edge iff $x_{i},x_{j}>x_{m}$ for all $m$ such
that $i<m<j$.}
\end{defn}


It was proved that HVGs are always outerplanar \cite{severini}, and it is easy to see that, since the order in the sequence (time series) yields a natural label of the vertices, by construction HVGs contain a `trivial' Hamiltonian path given by the sequence $(1,2,\dots,N)$. Furthermore, the mean degree $\bar d$ of a HVG associated to a periodic series with period $T$ is \cite{Feig}
\begin{equation}
{\bar d}=4\bigg(1-\frac{1}{2T}\bigg)
\label{periodic_k}
\end{equation}
Here we consider the set of HVGs generated from trajectories of the well-known logistic map $x_{t+1}=\mu x_{t}(1-x_{t})$ where $\mu\in[0,4]$ is a parameter and $x_{t}\in[0,1]$. This is a unimodal map that undergoes a period-double bifurcation route to chaos as the parameter $\mu$ is increased. For $\mu < \mu_{\infty}\approx 3.569...$ the attractive set consists of periodic orbits with period $2^n$, where $n$ is an integer that increases without bounds as $\mu$ approaches the accumulation point $\mu_{\infty}$. For any given integer $n\geq 0$, one can associate a range of values $I_{n}=[\mu_n,\mu_{n+1})$ where $\mu_n$ is the value of the map's parameter for which a stable periodic orbit of period $2^n$ first appears (with $\lim_{n \to \infty} \mu_n \equiv \mu_{\infty}$). By construction, we have a bijection between $I_n$ and $\mathbb{N}$.\\
HVGs generated from trajectories of the logistic map have been studied before, and have been coined as {\it Feigenbaum graphs} \cite{Feig}. We start by formally introducing these:

\begin{defn}(The Infinite Feigenbaum graph)\label{def:infFBG}
{\it Consider a periodic orbit of period $T=2^n$ from the logistic map, and build a time series of $2NT+1$ data (with $N\in\mathbb{N}$), as 
\[
\mathcal{S}=\{ x_{-NT},x_{-N{T}+1},\dots,x_{-1},x_0,x_1,\dots,x_{N{T}-1},x_{N{T}}\}.
\]
\noindent The associated HVG is referred to as a Feigenbaum graph \cite{Feig}. In the limit $N\to\infty$, the associated HVG is a locally finite infinite graph. It is denoted $F_n^\infty$ and is referred to as the infinite Feigenbaum graph \cite{Feig}.}
\end{defn}

\begin{remark}
As the infinite Feigenbaum graph is a connected, locally finite, infinite graph, it is countable \cite{graph_wilson}.
\end{remark}
A sketch of $F_n^\infty$ for a few values of $n$ is depicted in figure \ref{fig:Periodic-Feigenbaum-graphs}.

\begin{figure}[h]
\begin{centering}
\includegraphics[scale=0.7]{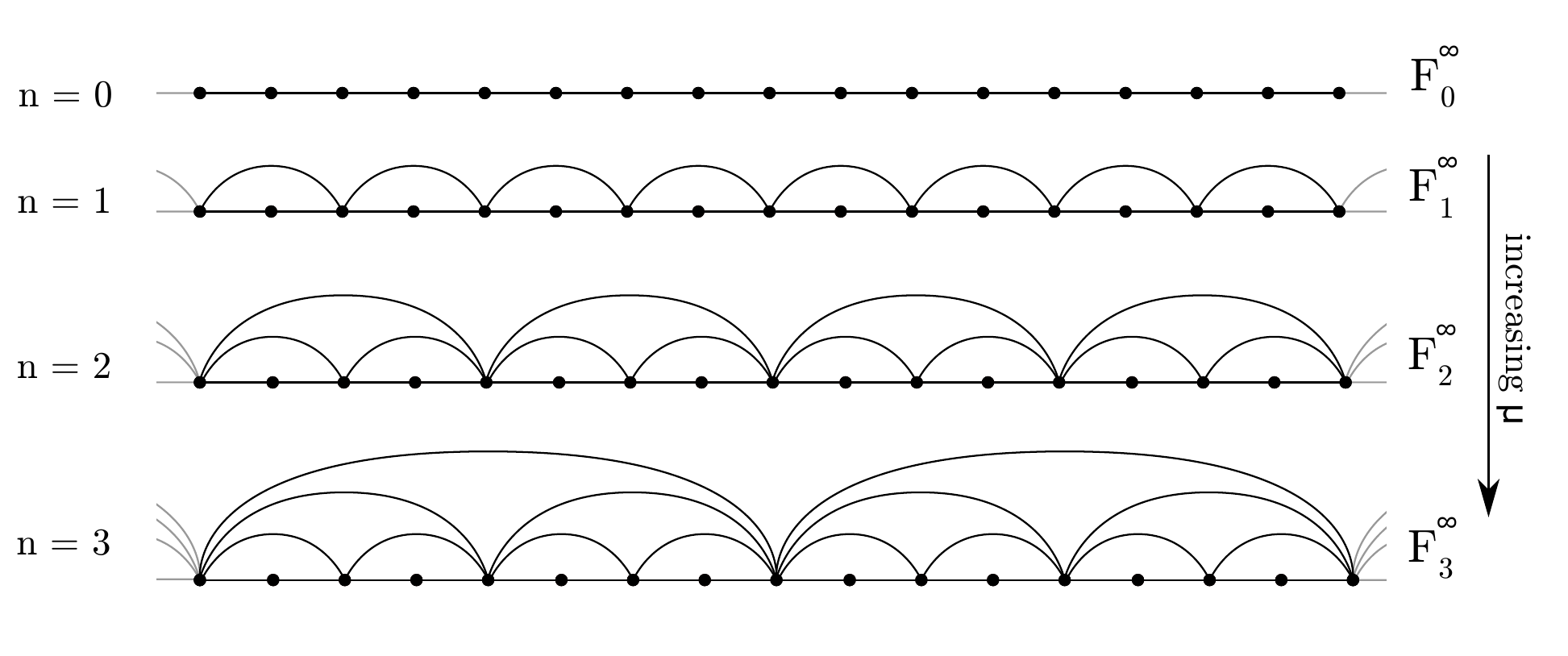}
\protect\caption{\label{fig:Periodic-Feigenbaum-graphs}Sketch of the family of (infinite) Feigenbaum graphs $F_n^\infty$ for $\mu<\mu_{\infty.}$, displaying the sequence of graphs associated to the periodic attractors of increasing period $T=2^{n}$ of an unimodal map undergoing a period doubling cascade.}
\par\end{centering}
\end{figure}

\subsection{Feigenbaum graphs with $\mu<\mu_{\infty}$: a simple parametrisation $F_n^k$}\label{sec:finFBG}
Observe that for any $n<\infty$ (that is, for $\mu<\mu_{\infty}$), the trajectory generated by the logistic map is --after an irrelevant transient-- a periodic series. In these cases, the Feigenbaum graph is built as a concatenation of identical subgraphs (see figure \ref{fig:Periodic-Feigenbaum-graphs}).
We label the {\it motifs} which build these graphs as $F_{n}$, and for illustration purposes we show in figure \ref{fig:Single-motifs-} the first four of them.\\
For a fixed $n$, we can then `concatenate' motifs (in a way which will be formally defined later) and the graph resulting of concatenating $k$ motifs is denoted by $F_{n}^{k}$ (so that $F_{n}^{1}=F_{n}$ and $\lim_{k\to \infty}F_n^k=F_n^\infty$). Whereas in \cite{Feig} a Feigenbaum graph was defined for a bi-infinite trajectory ($k\to \infty$), one can however extend this definition to {\it finite} graphs by fixing a finite $k$. Accordingly, the elements in the bi-parametric set $\{F_n^k\}_{n\geq 0, k \geq 0}$ (where $F_n^1 := F_n$)  provides a useful enumeration of finite Feigenbaum graphs. For completeness, we define $F_n^0$ to be the empty graph of one node. With a little abuse of language, in what follows we will indistinctively refer to $F_n^k$ and $F_n^\infty$ as Feigenbaum graphs.\\

\begin{figure}[h]
\begin{centering}
\includegraphics[scale=0.3 ]{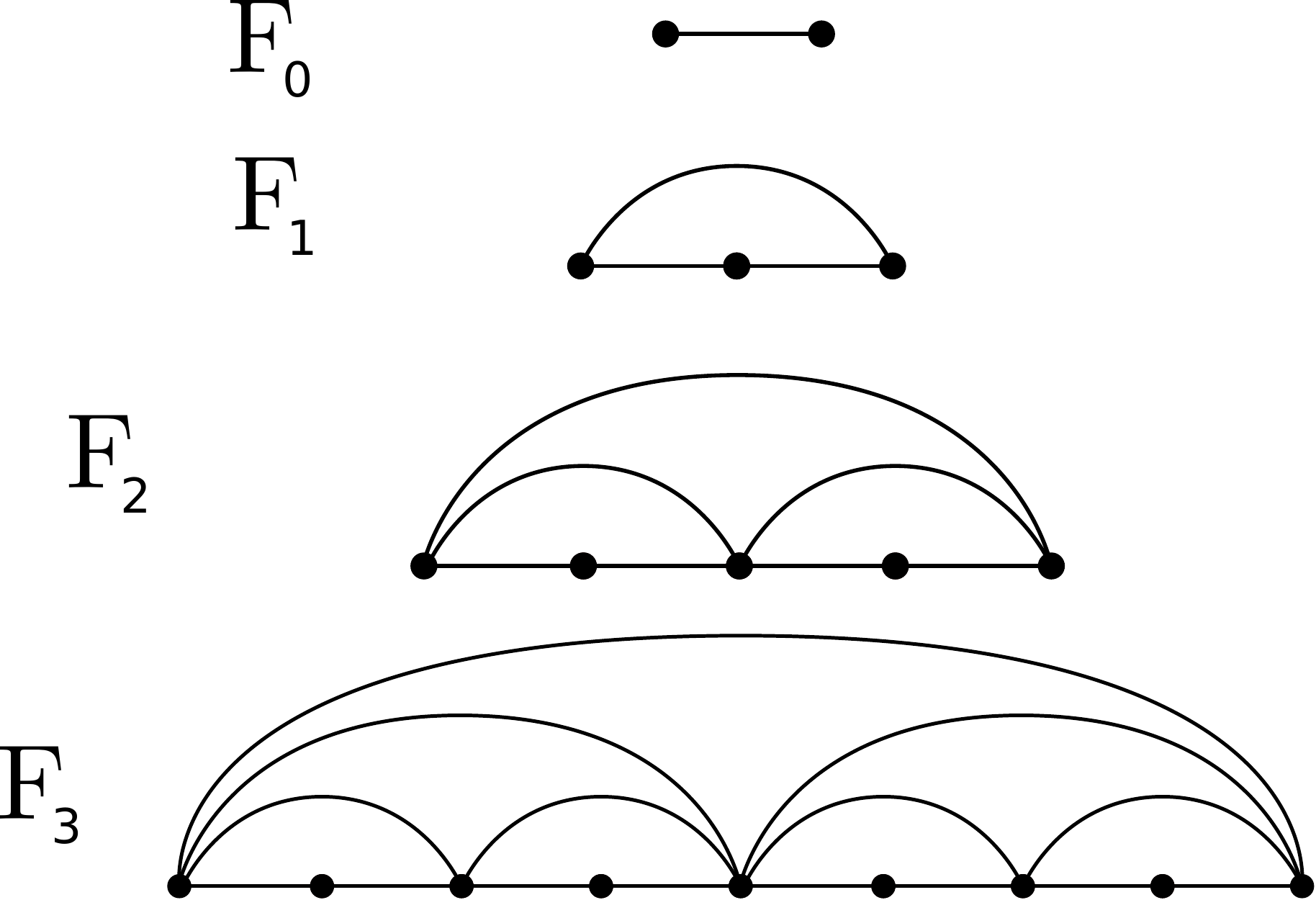}
\par\end{centering}
\protect\caption{\label{fig:Single-motifs-}Single motifs $F_{n}^1\equiv F_n$ of the HVGs associated to the logistic maps with period $T=2^{n}$.}
\end{figure}

\begin{remark}
Given an integer $n$, both $F_n^k$ and $F_n^\infty$ are unique $\forall\mu \in I_n$: for the range of values of $\mu$ for which the map is periodic and the associated time series has the same period, the resulting Feigenbaum graph is unique, i.e. it is  not dependent on the map's initial condition. This observation, as we shall see, does not hold for the range of values of $\mu$ that correspond to chaotic behaviour.
Furthermore, the hierarchy of Feigenbaum graphs is universal for all unimodal maps undergoing a Feigenbaum scenario. In particular, this means that this hierarchy is not only associated to the logistic map but to any unimodal map. The reason is because Feigenbaum graphs are based in the order of visits to the stable branches and this order is unique for all unimodal maps. 
\end{remark}

\noindent Note that we can generate all the elements of the family $\{F_n^k\}_{n,k \geq 0}$ by combining them using two graph-theoretical operations which we now define:

\begin{figure}
\begin{centering}
\includegraphics[scale=1]{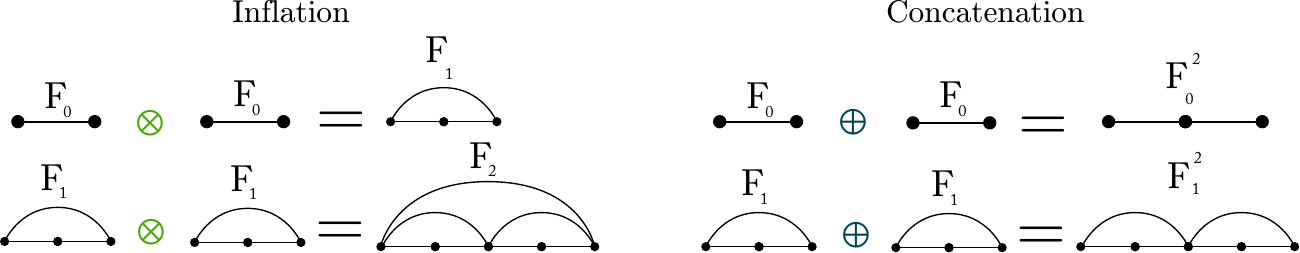}
\par\end{centering}
\protect\caption{\label{fig:Rules}A visualisation of the motif inflation ($\otimes$) and motif concatenation ($\oplus$) rules.}
\end{figure}

\begin{defn}(Motif inflation $\otimes$)
Consider two undirected graphs $G_1=(V_1,E_1)$ and $G_2=(V_2,E_2)$, where $V_i$ are the vertex sets ($|V_i|=N_i$) and $E_i$ are the edge sets, where $V_i$ are totally ordered. We label the vertex set of $G_1$ by $V_1=(1,2,\dots,N_1)$ and similarly for $G_2$ we have $V_2=(1',2',\dots,N'_2)$. Then $G_1 \otimes G_2$ is a graph which fulfils the following conditions:

\begin{enumerate}
  \item $G_1 \otimes G_2$ is a graph with $N_1+N_2-1$ vertices,
  \item whose vertex set $V''=(1,2,\dots,N_1-1,1'',2',3',\dots,N_2')$,
  \item where vertex $1''$ is a block vertex that merges the vertices $N_1$ and $1'$ (from $V_1$ and $V_2$ respectively), and inherits all the edges that were incident to  both of them.
  \item The vertices $1$ and $N'_2$ share an edge in $G\otimes G$.
  \item The remaining edge set is formed by all edges between vertices $\{2,3,\dots,N_1-1\}$ inherited from $G_1$ and between the vertices $\{2',3',\dots,[N_2-1]'\}$ inherited from $G_2$.
  \end{enumerate}
\end{defn}
For illustration, a visualisation of the inflation operation is shown in the left panel of figure \ref{fig:Rules}. 

By induction, one can then easily prove that $F_{n+1}=F_{n}\otimes F_{n}$. Let us define ${\bf A}_n$ as the adjacency matrix of $F_n$ (defining the adjacency matrix ${\bf A}=\{a_{ij}\}$ to be a binary matrix which assigns $a_{ij}=1$ if $i$ and $j$ are two nodes linked by an edge, and zero otherwise). The adjacency matrix ${\bf A}_{n+1}$ of $F_{n+1}$ can be expressed in terms of the adjacency matrix ${\bf A}_n$ of $F_n$ as illustrated in figure  \ref{fig:An}. Therefore, starting from $k=1$, the operation $\otimes$ iteratively generates all the elements of the set $\{F_n^{1}, n>0\}$. This means that $(F_n,\otimes)$ is a unary system if we interpret $\otimes$ as a unary operation $\otimes: \{F_n\} \to \{F_n\}$.
Notice however that this set is not closed under $\otimes$, as for $n_1 \neq n_2, \ \nexists \  n_3 >0 $ such that $F_{n_1} \otimes F_{n_2} = F_{n_3}$. The graphs formed by combining together $F_{n_1}$ and $F_{n_2}$ with $n_1\neq n_2$ are indeed not Feigenbaum graphs, but are still HVGs, hence the set of all HVGs is closed under this operation.

\begin{figure}
\begin{centering}
\includegraphics[scale=0.7]{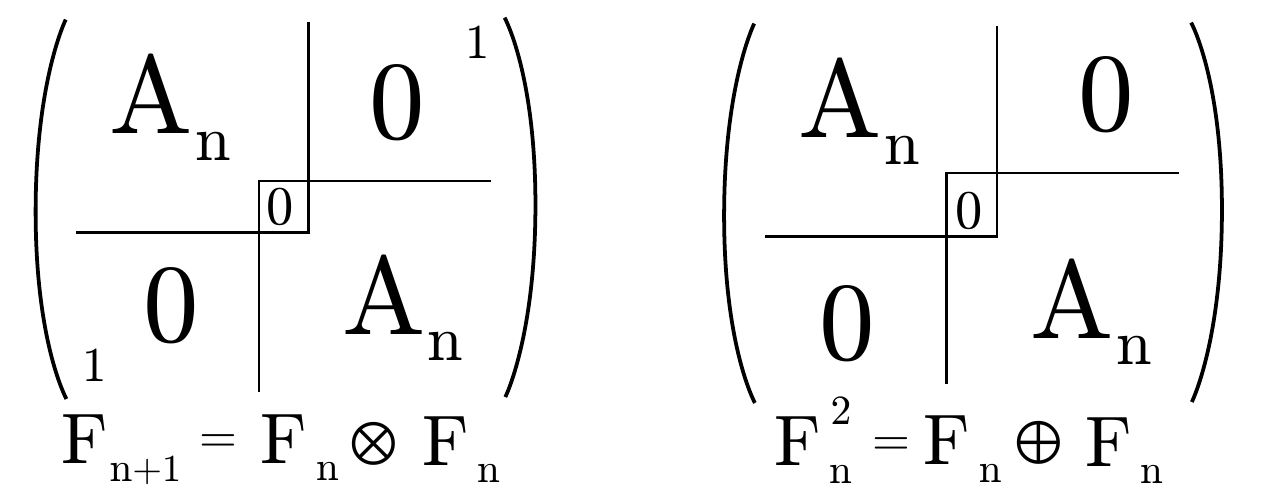}
\par\end{centering}
\protect\caption{\label{fig:An} The adjacency matrices of  $F_{n+1}= F_{n} \otimes F_{n}$ and $F_n^{2}=F_n \oplus F_n$ expressed in terms of ${\bf A}_n$. These composition rules are labelled as a graph inflation and concatenation respectively.}
\end{figure} 

\begin{defn}\label{concatenation}  (Motif concatenation $\oplus$) Consider two undirected graphs $G_1=(V_1,E_1)$ and $G_2=(V_2,E_2)$, where $V_i$ are the vertex sets ($|V_i|=N_i$) and $E_i$ are the edge sets, where $V_i$ are totally ordered. We label the vertex set of $G_1$ by $V_1=(1,2,\dots,N_1)$ and similarly for $G_2$ we have $V_2=(1',2',\dots,N'_2)$. Then $G_1 \oplus G_2$ is a graph which fulfils the following conditions:

\begin{enumerate}
  \item $G_1 \oplus G_2$ is a graph with $N_1+N_2-1$ vertices,
  \item whose vertex set $V''=(1,2,\dots,N_1-1,1'',2',3',\dots,n_2')$,
  \item where vertex $1''$ is a block vertex that merges the $N_1$ and $1'$ (from $V_1$ and $V_2$ respectively), and inherits all the edges that were incident to  both of them,
  \item The vertices $1$ and $N_2$ \textbf{do not} share an edge in $G_1 \oplus G_2$,
  \item The remaining edge set is formed by all edges between vertices $\{2,3,\dots,N_1-1\}$ inherited from $G_1$ and between the vertices $\{2',3',\dots,[N_2-1]'\}$ inherited from $G_2$.
  \end{enumerate}
  \end{defn}
  
A visualisation of this rule, both graphically and algebraically, is shown in the right panels of Figure \ref{fig:Rules} and \ref{fig:An} respectively. We also define ${\bf A}_n^k$ to be the adjacency matrix of $F_n^k$. To avoid confusion, we also state that, using parentheses, $({\bf A}_n^k)^p$ is the $p$th power of the corresponding adjacency matrix\\
One can easily see that, locally, the inflation rule on two graphs $G_1 \otimes G_2$ is equivalent to the concatenation one $G_1 \oplus G_2$ if we add an extra edge between the first and last vertex. Now, for any given $n$, one has $F_n^k \oplus F_n^1= F_n^{k+1}$. It is also easy to prove that $\forall n\geq0, \  k_1,k_2\geq0, F_n^{k_1} \oplus F_n^{k_2} = F_n^{k_1+k_2}$. Therefore, for a fixed $n>0$, the operation $\oplus$ generates all the elements of the set $\{F_n^k, k>0\}$. It is also easy to prove that, for a fixed $n>0$, $(F_n^k,\oplus)$ is a commutative monoid with the identify element being the empty graph of one node $F_n^0$, hence  $(F_n^k,\oplus)$ is isomorphic to $(\mathbb{N},+)$ .\\ 


\begin{remark}
Note that the set $\{F_n^k\}|_{n,k}$ can also be created with the aid of simplicial complexes. Given an arbitrary $F_n$, we can create $F_{n+1}$ by gluing a triangle (i.e., a $2$-simplex) to the edges attached to each node with degree $2$. We show $F_3$ in the standard way, with the corresponding simplicial complex representation and equivalent nodes, in Fig.~\ref{SCFB3}. In the same figure we also depict $F_4$, with the $8$ $2$-simplices glued to the edges attached to each node with degree $2$ in $F_3$.
\end{remark}

\begin{figure}
\begin{centering}
\includegraphics[scale=0.7]{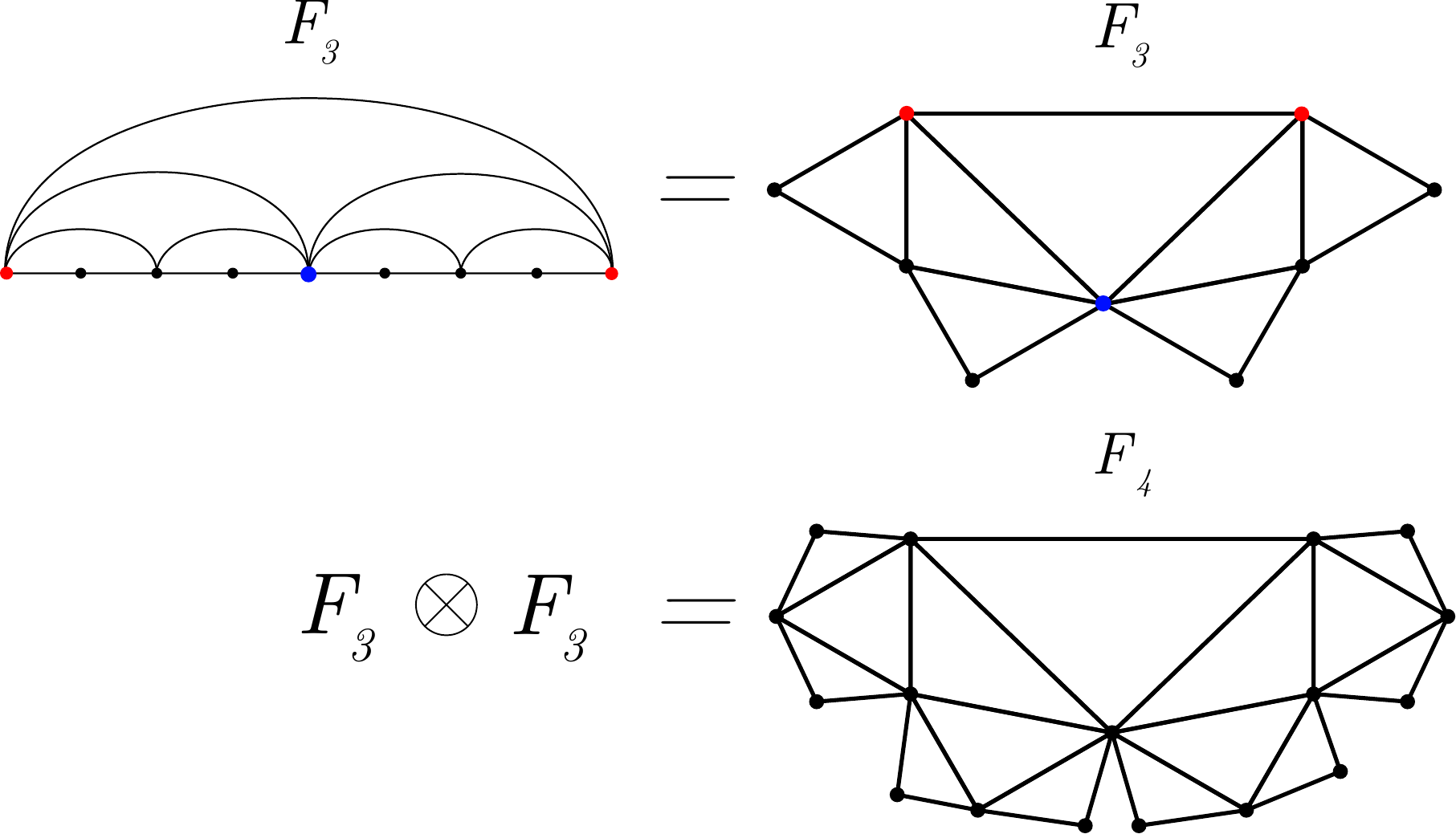}
\par\end{centering}
\protect\caption{\label{SCFB3}A visualisation of the standard layout of $F_3$ (top left) with its simplicial complex layout (top right), along with equivalent nodes in red and blue. On the bottom we plot $F_4$, and we can see the $8$ $2$-simplices glued to the edges attached to each node with degree $2$.}
\end{figure}

The set $\{F_n^k\}|_n$ (fixed $n$) is finitely generated by $F_n^1$ under $\oplus$, while the generating set of $\{F_n\}$ under $\otimes$ is $\{F_1\}$. 
Now, consider the larger set $\{F_n^k,n,k \in \mathbb{N}\}$ where $n$ and $k$ are now free parameters. This set contains the two-parameter $(n,k)$ family of Feigenbaum graphs. This set is again finitely generated by $F_1$ using the operations $\oplus$ and $\otimes$.  
Exploration of the algebraic properties of $\{F_n^k,n,k \in \mathbb{N}\}$ is an interesting topic for future research. However, here we are interested in the spectral properties of $F_n^k$. Some very basic observations, which will be helpful later in the task of bounding eigenvalues, are summarised in the following proposition.

\begin{prop}
\label{prop:initial}
Consider the set of graphs $F_n^k$, for $n,k \in \mathbb{N}$, and let $V_n^k$ and $E_n^k$ be the size of the vertex and edge set respectively, with $V_n:=V_n^1$, $E_n=E_n^1$. Then the following holds:
\begin{enumerate}
\item ${F}_n$ is a graph with $V_n=2^n+1$ vertices and $E_n=2^{n+1}-1$ edges.
\item ${F}_n^k$ is a graph with $V_n^k=2^nk+1$ vertices and $E_n^k=k(2^{n+1}-1)$ edges. 
\end{enumerate}
\end{prop}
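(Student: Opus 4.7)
The plan is to prove both statements by induction, leveraging the two composition rules already defined in the excerpt. The key observation is that inflation $\otimes$ and concatenation $\oplus$ each identify exactly one vertex from the first copy with one from the second, but they differ in whether an extra edge is added between the outer endpoints.

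For part (1), I would induct on $n$ using $F_{n+1} = F_n \otimes F_n$. The base case is $F_1$: the period-$2$ motif comes from a piece of the orbit of the form $(a,b,a)$ with $a>b$, which yields an HVG with $3$ vertices and $3$ edges (the two path edges plus the visibility edge joining the two outer $a$'s over the valley $b$), matching $2^1+1=3$ and $2^2-1=3$. For the inductive step, read off from Definition (Motif inflation) and Figure \ref{fig:An} that $V_{n+1}=2V_n-1$ (the two boundary vertices are merged into one block vertex) and $E_{n+1}=2E_n+1$ (all edges of each copy survive, since the merged block vertex inherits them, and one new edge is added by clause (4) between the extreme vertices $1$ and $N_2'$). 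Substituting the inductive hypothesis yields $V_{n+1}=2(2^n+1)-1=2^{n+1}+1$ and $E_{n+1}=2(2^{n+1}-1)+1=2^{n+2}-1$, closing the induction.

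For part (2), I would fix $n$ and induct on $k$, using $F_n^{k+1} = F_n^k \oplus F_n$ (a special case of the relation $F_n^{k_1}\oplus F_n^{k_2}=F_n^{k_1+k_2}$ stated earlier). The base case $k=1$ is exactly part (1). For the inductive step, Definition (Motif concatenation) gives $V_n^{k+1}=V_n^k+V_n-1$ and, crucially, $E_n^{k+1}=E_n^k+E_n$, since clause (4) of the concatenation definition explicitly forbids the extra edge between the outer endpoints. Plugging in the inductive hypothesis, $V_n^{k+1}=(2^nk+1)+(2^n+1)-1=2^n(k+1)+1$ and $E_n^{k+1}=k(2^{n+1}-1)+(2^{n+1}-1)=(k+1)(2^{n+1}-1)$, completing the induction.

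There is no substantive obstacle; the argument is entirely bookkeeping driven by the two operational definitions. The only point deserving care is the base case $F_1$, which is not produced by either operation from a smaller Feigenbaum graph and so must be checked directly against Definition \ref{def:HVG} applied to a minimal period-$2$ window of the orbit.
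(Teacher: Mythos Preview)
Your argument is correct and is precisely the explicit version of the paper's one-line proof (``trivially follows from the definitions of $\otimes$ and $\oplus$''): you just unpack the vertex/edge bookkeeping that those definitions encode. One very minor remark: your claim that $F_1$ is not produced from a smaller Feigenbaum graph is not quite right, since $F_0$ (two vertices, one edge) exists and $F_1=F_0\otimes F_0$; but using $F_1$ as the base case and verifying it directly is of course perfectly valid.
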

\begin{proof} The proof trivially follows from the definitions of $\otimes$ and $\oplus$.
\end{proof}

The spectral properties of ${F}_n^k$ will be addressed in \cref{sec:regular}. For readability, we will split this initial study in two natural directions: in \cref{sec:n} we set $k=1$ and consider the spectral properties of ${F}_n$ (i.e., for $k=1$, as $n$ increases), whereas in \cref{sec:k} we set $n$ fixed and consider the spectral properties of ${F}_n^k$ as $k$ increases, i.e. the finite size truncations of infinite Feigenbaum graphs. Finally in \cref{sec:altogether} we will explore the spectrum of $F_n^k$ when $n$ and $k$ are finite and both vary.

\subsection{The large $n$ and $k$ limits}
The variables $n$ and $k$ have clear, different meanings: $n$ is related to the period $T$ of the logistic map's trajectories via $T=2^n$ (physically speaking, $n$ is related to $\mu$ ). In particular, the period-doubling bifurcation cascade that the logistic map experiences relates to successive increases of $n$, where the onset of chaos ($\mu=\mu_{\infty}$) is only reached in the limit $n \to \infty$. On the other hand,  $k$ is a parameter that only describes the length of the trajectory (and therefore properties of a trajectory, for example its periodicity, will only be revealed when $k$ is large, or in the limit $k \to \infty$),  in particular $k$ is the number of concatenated motifs and is related to the size $N$ of the trajectory (the length of the time series) via $N=V_n^k=2^nk+1$.  Note that a priori we have two possible ways to take the limits of large $n$ and $k$. On the one hand, we can fix $n$ and let $k \to \infty$. This mimics the situation where we have an infinitely long trajectory of finite period $T=2^n$. In this limit, $F_n^k$ is by construction a locally finite infinite graph, i.e. the number of vertices is infinite but each vertex has a finite number of edges.

On the other hand, we can also fix $k$ (e.g. $k=1$) and take $n\to \infty$. This mimics the situation where only a single `period' is extracted from the series, however as this period is $T=2^n$, in the limit the time series is infinitely long, obtaining an infinite graph. However, in this limit the graph is {\it not} locally finite: as we will show later in Proposition~\ref{prop:basic} the degree of the central vertex of $F_n$ increases linearly with $n$, so there are at least $k$ vertices in $F_n^k$ whose degree increases (without bound) with $n$. On the other hand, this is still a countable infinite graph. 

Therefore, taking the limits $k\to \infty$ and $n\to \infty$ yield different types of infinite graphs: a locally finite infinite graph in one hand and a countable infinite graph on the other. In particular, the fact that the limit $n\to \infty$ yields infinite graphs which are not locally finite has important consequences for the spectral properties of these graphs. Recall that for finite graphs, the spectrum of a graph is simply the set of all eigenvalues of the respective adjacency matrix $\bf A$. However if the graph is infinite, the spectrum of $\bf A$ depends on the choice of the space on which $\bf A$ acts as a linear operator (typically one considers the Hilbert space ${\ell}^2(V)$, where $V$ is the set of vertices). It is well known that if the infinite graph is locally finite, then $\bf A$ acts on  ${\ell}^2(V)$ as a self-adjoint operator and its norm is smaller or equal to $d_{\text{max}}$, the largest degree of the graph \cite{spectral_infinite}. If the property of local finiteness is relaxed, then this operator is not bounded anymore. 
Incidentally, one could create a self-adjoint compact operator on ${\ell}^2(V)$ from an adjacency matrix ${\bf A}=(A_{ij})_{i,j\in\mathbb{N}}$, even if the respective graph is not locally finite, by using the approach of Torgasev \cite{torgasev}: let $c\in(0,1)$ and label the vertices of the graph $V=\{v_1,v_2,\dots\}$ (note that one can always do this as this set is countable). Define define the matrix  ${\bf B}_c=(b_{ij})_{i,j\in\mathbb{N}}$ with $$b_{ij}=A_{ij}\cdot c^{i+j-2}.$$
The matrix ${\bf B}_c$ is a self-adjoint and compact operator on ${\ell}^2(V)$, which is Hilbert-Schmidt and therefore enables the use of the well-developed field of spectral theory. The drawback is that the spectrum arbitrarily depends on both the labelling of the graph and on the constant $c$.


In summary, the limit $\lim_{n,k\to \infty} F_n^k$ (which is the one we should take to explore the onset of chaos $\mu=\mu_{\infty}$) is non-trivial. For this reason, we leave these as interesting open problems, and from now on we will assume that both $n$ and $k$ are arbitrary large but finite.

\subsection{Feigenbaum graphs with $\mu>\mu_{\infty}$: Chaotic Feigenbaum graph ensembles}
\label{sec:intro_chaos}
In the range $\mu>\mu_{\infty}$, the trajectories of the logistic map are typically chaotic (except for the so-called windows of periodicity, which are essentially subintervals where the period-doubling cascade is self-similarly reproduced albeit with an initial period larger than one). The first observation is that in the chaotic regime the graphs can no longer easily be enumerated. In fact, for a given $\mu$ in the chaotic range the Feigenbaum graph is no longer unique: each different condition will typically generate a different chaotic trajectory and therefore a different Feigenbaum graph. Hence each value of $\mu$ spans a different {\it ensemble} of Feigenbaum graphs, generated by sampling different initial conditions in the map. As discussed in \cref{sec:finFBG}, this is at odds with the case $\mu<\mu_{\infty}$, where for any particular $\mu$ all realisations in an ensemble associated to $\mu$ yielded the same Feigenbaum graph, therefore the ensemble was fully degenerate in that case.

Of course as the length of the time series approaches infinity, the statistical properties of two different chaotic trajectories extracted at the same value of $\mu$ 
are asymptotically identical, so we expect some kind of statistical equivalence in the resulting Feigenbaum graphs. For instance, for $\mu=4$ (fully developed chaos) one can compute the degree distribution of the (ensemble of) Feigenbaum graphs. This is a statistical quantity which can be solved analytically by using a diagrammatic technique \cite{nonlinearity}, and has been shown to be a valid limit for single realisations. However, in this work we are interested in studying the spectral properties of Feigenbaum graphs, so we need to address whether these properties are sufficiently `robust', i.e. we should check whether these properties do not change much between realisations. This naturally leads to the concept of {\it self-averaging quantities}, which will be investigated in section \cref{sec:chaos1}. Then, in sections \cref{sec:chaos2} we will try to relate the properties of the time series to spectral properties of the graphs. 

\section{The case $\mu<\mu_{\infty}$: Spectral properties in the period-doubling cascade}
\label{sec:regular}
Here we explore the spectral properties of $\{F_n^k\}$. In particular, we will focus on the maximal eigenvalue of the adjacency matrix of $F_n^k$, although other properties will also be considered, such as the full spectrum, the determinant, and the tree number. For convenience, we split this section in three main blocks: the first explores the dependence of $n$ by focusing on the properties of $\{F_n\}_{n\geq0}$. The second  focuses on the dependence on $k$ by exploring properties of $\{F_n^k\}_{k\geq0}$ where $n$ is fixed. Finally we explore $\{F_n^k\}_{n\geq0,k\geq0}$, where both $n$ and $k$ can vary. 

\subsection{A first view on the full spectrum of $F_n$}
Here we fix $k=1$ and consider the set $\{F_n\}_{n\geq0}$, and we start by exploring the full spectrum the adjacency matrices $\{{\bf A}_n\}_{n\geq0}$ (i.e. the set of $2^n+1$ eigenvalues) associated to $F_n$. The first quantity worth exploring is the number of distinct eigenvalues of ${\bf A}_n$, labelled $q({\bf A}_n)$. To bound this, it is useful to resort to the {\it diameter} $D_n$ of $F_{n}$, defined as
$D_{n}=\max_{i,j}\{\delta_{ij}\}$, where $\delta_{ij}$ is the (shortest path) distance between node $i$ and node $j$. A well known result is $q({\bf A}_n) \geq D_n+1$.
The following theorem provides the diameter $D_n$:

\begin{thm}(Diameter of $F_n$) 
\label{theorem:diameter} The diameter of $F_{n}$ is
$
  D_n = n.
$
\end{thm}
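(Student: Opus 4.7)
The plan is to induct on $n$ using the recursive structure $F_{n+1} = F_n \otimes F_n$, with base case $F_1$ a triangle (diameter $1$). The upper and lower bounds will be handled separately.

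For the upper bound $D_n \leq n$, a direct induction on the diameter alone does not close, since one must control distances to the endpoints $\ell_n$ and $r_n$ (which become the gluing data of $\otimes$). I would strengthen the inductive hypothesis to
\[
(\star) \qquad d(v, \ell_n) + d(v, r_n) \leq n+1 \qquad \forall\, v \in F_n .
\]
The base case is immediate. For the inductive step, decompose $F_{n+1}$ into left and right copies $L, R \cong F_n$ sharing the central vertex $c = r_L = \ell_R$; since $\ell_n$ and $r_n$ are adjacent in $F_n$ by construction of $\otimes$, for $v \in L$ one has $d(v, \ell_{n+1}) \leq d_L(v, \ell_L)$ and $d(v, r_{n+1}) \leq d_L(v, c) + 1$, whose sum is at most $(n+1)+1$ by $(\star)$ applied in $L$; this propagates $(\star)$ to $F_{n+1}$. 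The bound $D_{n+1} \leq n+1$ then follows by cases: vertices in the same copy are at distance $\leq D_n = n$ by the IH, whereas for $v \in L, w \in R$ any shortest path crosses either through $c$ (length $d_L(v, c) + d_R(c, w)$) or via the extra edge $\ell_{n+1} r_{n+1}$ (length $d_L(v, \ell_L) + 1 + d_R(r_R, w)$); the two candidate lengths sum to at most $2(n+1)+1 = 2n+3$ by $(\star)$, so their minimum is at most $n+1$.

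For the lower bound $D_n \geq n$, I would exploit the simplicial description in the preceding Remark, which realises $F_{n+1}$ by gluing a triangle to every edge of $F_n$ incident to a degree-$2$ vertex. This operation preserves pre-existing distances and introduces, for each glued edge $uv$, a new degree-$2$ vertex $u'$ satisfying
\[
d_{F_{n+1}}(u', x) = 1 + \min\bigl(d_{F_n}(u, x),\, d_{F_n}(v, x)\bigr) \qquad \text{for every } x \in F_n .
\]
To obtain $D_{n+1} \geq D_n + 1$ it therefore suffices to exhibit, in $F_n$, adjacent vertices $u, v$ (with at least one of degree $2$) and a target $w$ with $d_{F_n}(u, w) = d_{F_n}(v, w) = n$; the glued vertex $u'$ then satisfies $d_{F_{n+1}}(u', w) = n+1$, witnessing the lower bound.

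The main obstacle is ensuring such a triple exists at every step: since $u \sim v$ forces $|d(u, w) - d(v, w)| \leq 1$, the target $w$ must lie in the intersection of the ``antipode sets'' of $u$ and $v$ at distance $n$, which can be empty for unfortunate choices (for instance, in $F_3$ the triple $(u, v, w) = (2, 3, 8)$ works but $(1, 2, 8)$ fails since $d(1, 8) = 2 < n$). I would establish existence at each step by leveraging the abundance of degree-$2$ vertices of $F_n$ (of which there are $2^{n-1}$) together with the mirror symmetry $v \mapsto V_n + 1 - v$ of the motif, which gives enough flexibility to always produce a valid triple; a clean organisation is to inspect the BFS layers of a chosen degree-$2$ vertex and pick an antipode $w$ whose back-pointer neighbour is itself adjacent to another degree-$2$ vertex.
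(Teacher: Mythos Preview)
Your upper bound is correct and is organised around a genuinely different invariant than the paper's. The paper proves a lemma that every vertex of $F_n$ is within $\lfloor n/2\rfloor$ of one of the three extremal points $\{0,R_{n-1},R_n\}$, and then routes any pair through those points. Your invariant $(\star)$, bounding $d(v,\ell_n)+d(v,r_n)\le n+1$, plays the same structural role but is simpler to propagate through $\otimes$ and feeds directly into the ``min of two crossing paths'' bound for $v\in L$, $w\in R$. Both arguments are equally short once the right auxiliary statement is isolated; yours avoids the floor-function case split.

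The lower bound, by contrast, has a real gap. Your reduction is the same as the paper's (both use the simplicial description of $F_n\to F_{n+1}$), and your displayed identity for $d_{F_{n+1}}(u',x)$ is correct. But the existence, at \emph{every} level $n$, of an edge $\{u,v\}$ incident to a degree-$2$ vertex together with a target $w$ satisfying $d(u,w)=d(v,w)=n$ is exactly the heart of the matter, and you do not prove it. The ``BFS layers plus mirror symmetry'' sketch is not an argument: symmetry alone does not force the two antipode sets to intersect, and the abundance of degree-$2$ vertices is not obviously relevant since you need control over \emph{both} endpoints of a specific edge. In fact, if you try to carry the same triple forward (keeping $w$ fixed and replacing $u$ by the newly glued $u'$), the induction stalls: both neighbours of $u'$ in $F_{n+1}$ are at distance exactly $n$ from $w$, so the next glued vertex lands at distance $n+1$, not $n+2$. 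The paper resolves this by an explicit \emph{alternating} construction---move $m$ to a glued neighbour, then at the next level move $l$ instead, then $m$, and so on---checking at each step that the glued simplex chosen is not on a shortest $l$--$m$ path. You would need either this alternation or a comparable concrete mechanism; as written, the lower bound is incomplete.
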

\begin{proof} The proof requires a Lemma. Let us consider $F_{n}$, whose nodes are labelled $\{0,1,2,\dots,2^n\}$, and denote by $R_{n}=2^n$ the rightmost node of $F_{n}$. In the following, we call \textit{extremal points} of $F_n$ the nodes $0$, $R_{n-1}$, and $R_{n}$. Notice that $R_{n-1}$ is the middle-point of the Hamiltonian path of $F_{n}^1$ that starts at node $0$ and proceeds by increasing node labels. Notice as well that since $F_n=F_{n-1}\otimes F_{n-1}$, $F_{n-1}$ has again three extremal points, which are labeled (with respect to $F_n$) either $\{0,R_{n-2}, R_{n-1}\}$ or $\{R_{n-1},\tilde{R}_{n-2},R_n\}$.\\
We denote by $\delta_{ij}$ the distance between node $i$ and node $j$, i.e., the length of the shortest path from $i$ to $j$. We can prove the following Lemma

\begin{lem} (Distance to the closest extremal point.)
\label{eq:lemma}
Consider the graph $F_{n}$ and the minimal distance between a generic node $i$ and the closest extremal point $E$
$$\delta^n_{i,E}=\min\{\delta_{i,0}; \delta_{i,R_{n-1}}; \delta_{i,R_{n}}\}.$$ Then for $n\ge
1$ we have:
\begin{equation}
  \delta_{i,E}\le \left\lfloor\frac{n}{2}\right\rfloor, \forall i\in[0, R_n].\nonumber
\end{equation}
\end{lem}

 \begin{proof}We will prove this by strong induction on $n$. When $n=1$, $F_{n}$ is a triangle whose diameter is $D_1 = 1$ and thus all three nodes are extrema, i.e. $\delta_{i,E}=0=\lfloor\frac{1}{2}\rfloor$.
  Now let us assume that the Lemma is valid up to $n-1$ and let's prove it for $n$. Without loss of generality, we assume $i\in[0, R_{n-1}]$. In this case we have that $\delta^n_{i,E} = \min\{\delta_{i,0}, \delta_{i, R_{n-1}}\}$, since $R_{n}$ will be at least one hop farther away from $i$ than either $0$ or $R_{n-1}$. Let us consider first the case where $i\in[0, R_{n-2}]$. In this case $i$ is closer to $0$ than to $R_{n-1}$ (or at most, at the same distance from either of the two), hence $\delta^n_{i,E}=\min\{\delta_{i,0},\delta_{i,R_{n-1}}\}=\delta_{i,0} \le 1 + \delta^{n-2}_{i,E} \le 1 + \left\lfloor\frac{n-2}{2}\right\rfloor$. The first inequality is due to the fact that node $0$ is an extremal point, and the distance from $i$ to $0$ will be either equal to $\delta^{n-2}_{i,E}$ or to $1 + \delta^{n-2}_{i,E}$. The second inequality is due to the induction assumption that Eq.~(\ref{eq:lemma}) is valid up to $n-1$. If $n$ is even, we have: $\delta^{n}_{i,E}\le 1+\frac{n-2}{2} = \frac{n}{2} = \left\lfloor \frac{n}{2}\right\rfloor$. If $n$ is odd instead, we have: $\delta^{n}_{i,E}\le 1+\frac{n-3}{2} = \frac{n-1}{2} = \left\lfloor\frac{n}{2}\right\rfloor$. The case where $i\in[R_{n-2}, R_{n-1}]$ is similar, since we can relabel each node in $[R_{n-2}, R_{n-1}]$ according to the function $\phi(i) = i - R_{n-2}$, and repeat the same reasoning. In conclusion, $\delta^{n}_{i,E}\le\left\lfloor \frac{n}{2}\right\rfloor$ for all $i\in[0, R_{n-1}]$. But since the graph is symmetric around $R_{n-1}$, we have $\delta^{n}_{i,E}\le\left\lfloor \frac{n}{2}\right\rfloor$ for all $i$ in $[0, R_n]$. \end{proof}
 

We can now finish the proof of Theorem \ref{theorem:diameter}. Let us consider two generic nodes $i$ and $j$ in $F_{n}$. First consider the case where $i\in[0, R_{n-1}]$ and $j\in[R_{n-1}, R_{n}]$. We have two possibilities for $\delta_{i,E}$ (either $\delta_{i,E}=\delta_{i,0}$ or $\delta_{i,E} = \delta_{i,R_{n-1}}$) and two possibilities for $\delta_{j,E}$ (either $\delta_{j,E}=\delta_{j,R_{n}}$ or $\delta_{j,E} = \delta_{j,R_{n-1}}$). So we have that
\begin{equation} \nonumber
  \delta_{ij} = \min\left\{
  \begin{array}{l}
    \delta_{i,0} + 1 + \delta_{j,R_{n}},\\
    \delta_{i,0} + 1 + \delta_{j, R_{n-1}},\\
    \delta_{i,R_{n-1}} + 1 + \delta_{j, R_{n}},\\
    \delta_{i,R_{n-1}} + \delta_{j, R_{n-1}}.\\
  \end{array}\right.
\end{equation}
This yields
\begin{equation} \nonumber
  \begin{array}{rl}
    \delta_{ij} &= \min\{\delta_{i,E} + 1 + \delta_{j,E}, \delta_{i,E} + \delta_{j,E}\} \\
    & = \delta_{i,E} + \delta_{j,E} \\
    & \le 2 \left\lfloor\frac{n}{2}\right\rfloor \le n\\
  \end{array}
\end{equation}
where we have used Lemma (\ref{eq:lemma}).  Conversely, if we have that $i,j\in
[0, R_{n-1}]$ (or equivalently, both $i,j\in[R_{n-1},R_{n}]$) then:
\begin{equation} \nonumber
  \delta_{ij} = \min\left\{
  \begin{array}{l}
    \delta_{i,R_{n-1}} + 1 + \delta_{j,0},\\
    \delta_{i,0} + 1 + \delta_{j, R_{n-1}},\\
    \delta_{i,0} + \delta_{j,0},\\
    \delta_{i,R_{n-1}} + \delta_{j, R_{n-1}}.\\
  \end{array}\right.
\end{equation}
With a similar argument as above, we get
\begin{equation} \nonumber
  \begin{array}{rl}
    \delta_{ij}&=\min\{\delta_{i,E}+d_{j,E}, 1 + \delta_{i,E} + \delta_{j,E}\}\\
    &= \delta_{i,E}+\delta_{j,E}\\
     & \le 2 \left\lfloor\frac{n}{2}\right\rfloor \le n,\\
  \end{array}
\end{equation}
%


i.e., $\delta_{ij}\le n$. Now, for an arbitrary $n$, we can always find a pair of nodes $l,m$ in $F_n$ which saturates the inequality, with $\delta_{l,m}=n$. For example, in $F_2$ we can set $l$ equal to node $1$, and $m$ equal to note $3$, and for $F_3$ we have $(l,m)=(1,5)$ and for $F_4$ we have $(l,m)=(2,12)$. To construct an algorithm that provides $l$ and $m$ in the general case, we start with $F_3$, pictured in the top left panel of Fig.~\ref{SCFBdiameter2}, along with nodes $l$ and $m$, with a shortest path (which is not unique) between them coloured in red. We move to the top right panel, where we have $F_4$ overlaid with the additional edges highlighted with dotted lines. We can move $m$ to $m'$, and the length of the shortest path between $l$ and $m'$ is increased by 1. This is because the new $2$-simplex which we moved $m$ in to is \textit{not} glued to an edge which is a member of a shortest path between $l$ and $m$. We can repeat this process when moving from $F_4$ (bottom left panel) to $F_5$ (bottom right panel), however instead of moving $m'$, we move $l$ in a similar fashion. This is because the new simplices that are glued to the edges of $m'$ are a member of a shortest path between $l$ and $m'$, hence if we were to again move $m'$ to the new $2$-simplex joined to it, we would not increase the length of the shortest path. But if we move $l$ to $l'$ we again increase the length of the shortest path between our nodes by $1$. Repeating this process (by induction, using $F_3$ as our base case), alternating the movement of $l$ and $m$, we can find a shortest path between any two nodes of $F_n$, with length $n$.

Hence we can always find a $l$ and $m$ to give $\delta_{ij}$, and combining this with the bound $\delta_{ij}\leq n$ we have that $\max_{i,j} \{ \delta_{i,j} \}=n$, which concludes the proof.\end{proof}

\begin{figure}
\begin{centering}
\includegraphics[scale=1.5]{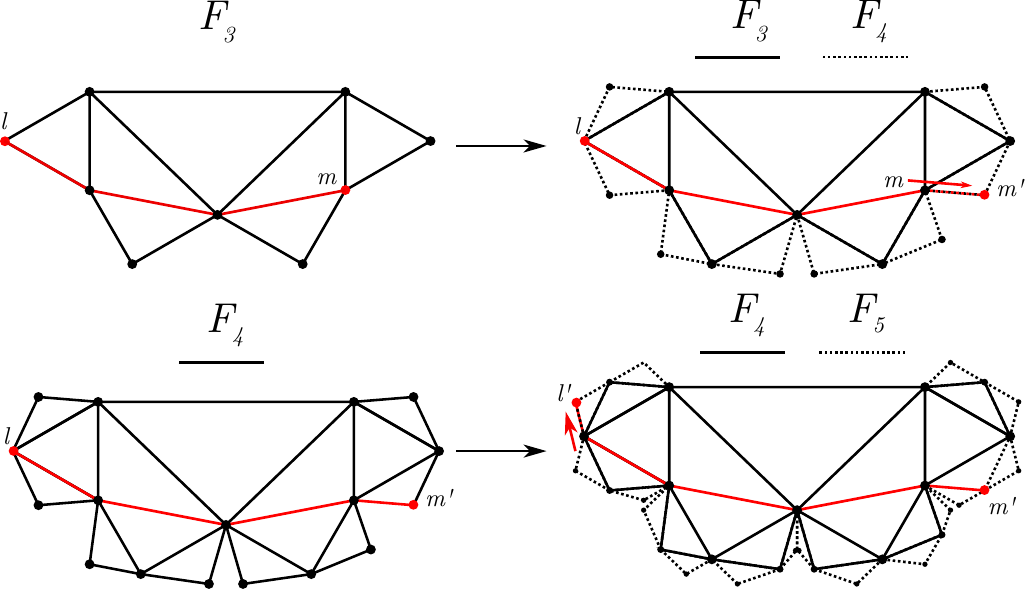}
\par\end{centering}
\protect\caption{\label{SCFBdiameter2}A visualisation of the algorithm used to create a shortest path $F_n$, with length $n$. In the top left panel we have $F_3$ with a shortest path between nodes $l$ and $m$ highlighted in red. In the top right panel we show how a new path can be created, in $F_4$, with length $4$, by moving $m$. In the bottom panels we show how we can move $l$ to create shortest path in $F_5$ with length $5$. The algorithm is described in the text.}
\end{figure}





According to the theorem above, we conclude $q({\bf A}_n)\ge n+1$. To evaluate how tight this bound is, in Fig \ref{fig:EigenvaluesSemilog} we plot the entire (point) spectrum of $F_n$ for $n\leq 10$ in semi-log. We can make several observations. First, the bound on $q({\bf A}_n)$ provided above does not seem to be tight, when comparing to the numerical evidence. On the contrary, the numerical evidence suggests instead that $q({\bf A}_n) = V_n=2^n+1$, i.e. all eigenvalues seem to be {\it distinct}, something that we leave as a conjecture.

Moreoever, the spectrum appears to be converging to a particular shape as $n$ increases. We will explore this fact further in \cref{sec:altogether}, but at this point we shall remark that the fact that the point spectra of ${\bf A}_n$ and ${\bf A}_{n-1}$ have resemblances is reminiscent of Cauchy's interlacing theorem~\cite{hwang2004cauchy}. Also, the spectrum is not symmetric and in particular the largest ($\lambda_{\text{max}}$) and smallest ($\lambda_{\text{min}}$) eigenvalues are different in modulus (thanks to the Perron-Frobenius theorem for primitive matrices, as discussed in \cref{sec:gelfand}). 

\begin{figure}
\begin{centering}
\includegraphics[scale=0.5]{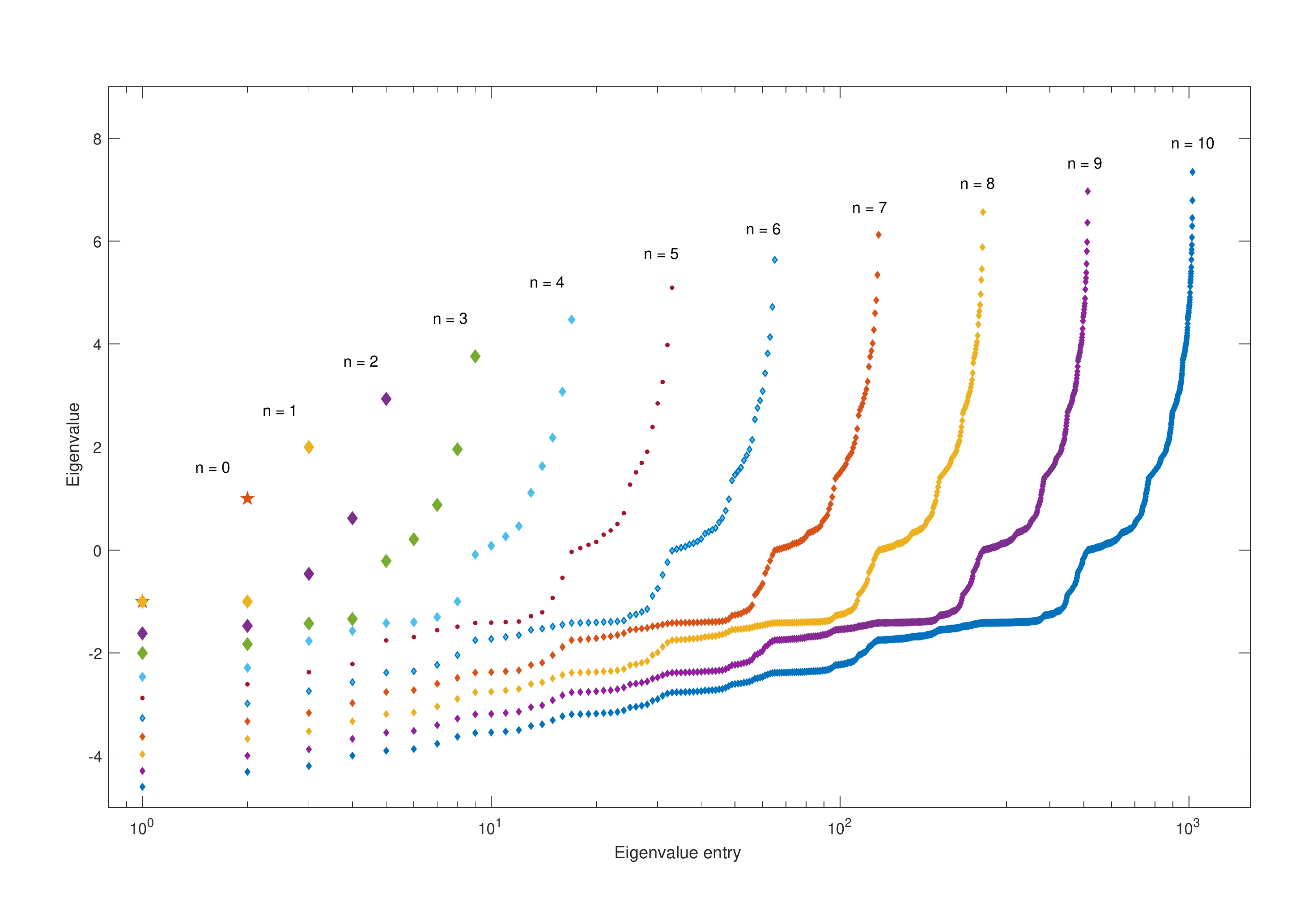}
\par\end{centering}
\protect\caption{\label{fig:EigenvaluesSemilog}A semilog plot of the spectrum (sorted in to ascending order) of $F_n$ for $0\leq n \leq 10$. We notice the spectrum seems to be converging to a particular curve and conjecture that all eigenvalues are distinct.}
\end{figure}

\subsection{\label{subsec:LargestEigenvalue}Largest eigenvalue  $\lambda_{\text{max}}$ for $F_n$}
\label{sec:n}
Here we continue to focus on the case $k=1$, and turn our attention to the largest eigenvalue of ${\bf A}_n$. The eigenvalues of ${\bf A}_n$ may be ordered as \[ \lambda_1\geq\lambda_2\geq\lambda_3\ldots\geq\lambda_{2^n+1} \] and as ${\bf A}_n$ is irreducible (the graph $F_n$ is undirected and connected), according to the Perron-Frobenius for non-negative irreducible matrices, $\lambda_1$ has multiplicity 1, we define $\lambda_\text{max}(F_n) = \lambda_1({\bf A}_n)$ (and similarly we define $\lambda_\text{min}(F_n) = \lambda_{2^n+1}({\bf A}_n)$). 
Using the \texttt{eigs} function in {\sc Matlab} it is possible to efficiently calculate the largest eigenvalue of sparse matrices, even if the matrices are large. In figure \ref{fig:loglogfit} we plot, in a log-log scale, $\lambda_{\text{max}}(F_n)$ for $1\leq n \leq 26$.  The data fit very well to the power law dependence $\lambda_{\text{max}}\sim n^{\alpha}$ with $\alpha\approx 0.5$. 

\begin{figure}
\begin{centering}
\includegraphics[scale=0.55]{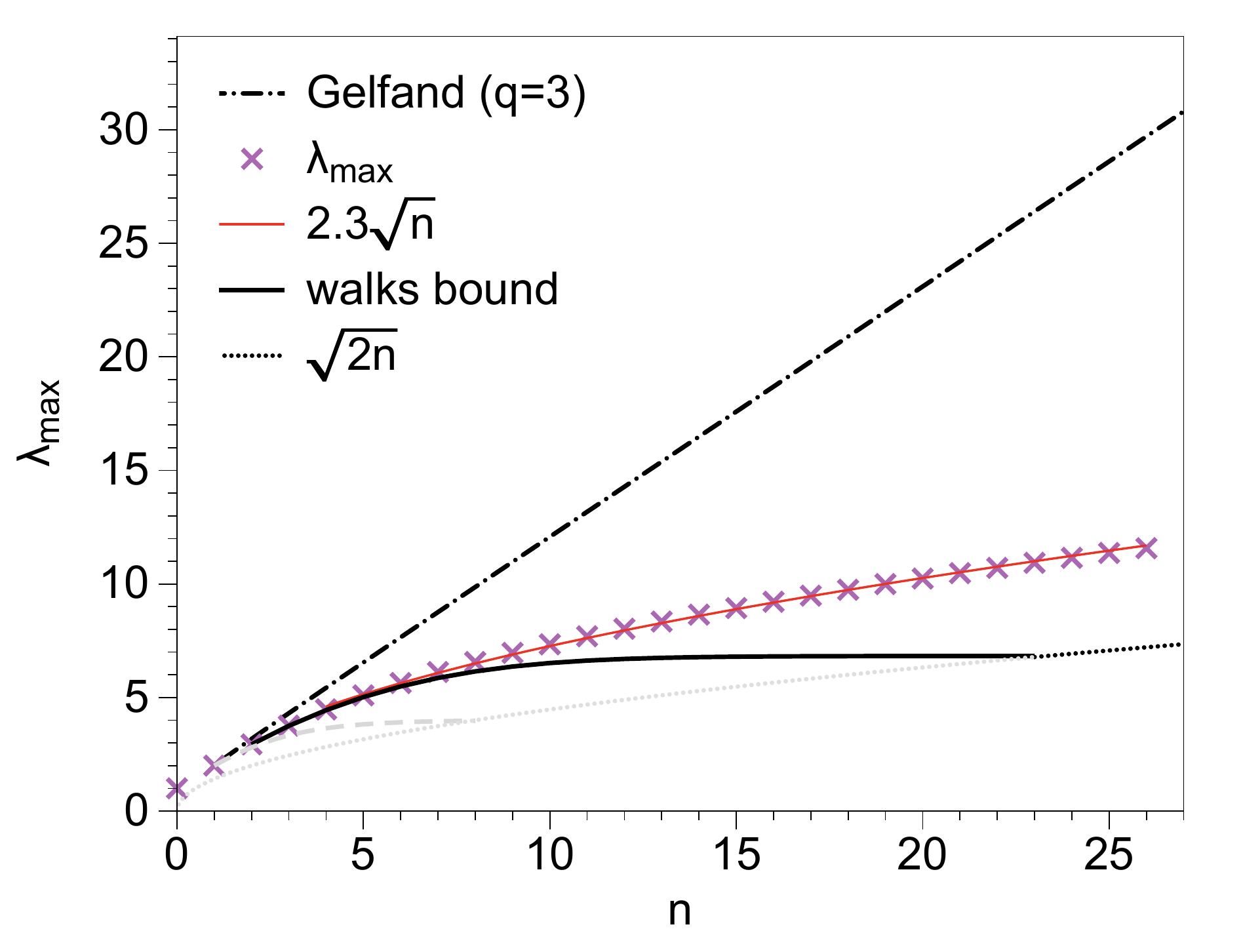}
\par\end{centering}
\protect\caption{\label{fig:loglogfit} Crosses denote numerical computation $\lambda_{\text{max}}(F_n)$, for $n=1,\dots,26$. The red solid line provides a power law fit $\lambda_{\text{max}} \sim n^{1/2}$. The rest of the lines are different analytical upper and lower bounds (see the text).}
\end{figure}

In this section our aim is to explain this scaling by finding adequate bounds.\\


\subsubsection{Gelfand's formula}
\label{sec:gelfand}
Gelfand's formula provides a bound for the spectral radius of an adjacency matrix $A$: $$\rho(A)=\lim_{q\to \infty} ||A^q||^{1/q},$$ where $||\cdot||$ is any matrix norm. In particular, for any finite $q\in\mathbb{Z^+}$ we have that $\rho(A)\leq||A^q||^{1/q}$.  

It is easy to prove that $\rho({\bf A}_n) = \lambda_{\rm max} (F_n)$. In fact, ${\bf A}_n$ is non-negative and irreducible, since the graph $F_{n}$ is non-empty, undirected and connected, and is also aperiodic, since each node of the graph belongs to at least one triangle. Consequently, ${\bf A}_n$ is primitive. The Perron-Frobenius theorem for non-negative primitive matrices guarantees that the largest eigenvalue is real, simple, and equal to the spectral radius $\rho(A)$. Therefore we can write $$\rho({\bf A}_n) = \lambda_\text{max}\leq||A^q||^{1/q}.$$ For simplicity, we choose $||\cdot||_{\infty}$, defined as $||A||_{\infty} = \max \limits _{1   \leq i \leq m} \sum _{j=1} ^n | a_{ij} |$.  We have that $||({\bf A}_n)^1||=2n$, this is because the node with the largest amount of 1-walks is the node with the largest degree; this is the central node and has degree $2n$ (see Prop.~\ref{prop:basic} below). For $q=2$ we have $\Vert({\bf A}_n)^2\Vert=2n^2+2$, a result which we prove in Appendix.~\ref{app:walks}. For $q=3$ we calculate $\Vert({\bf A}_n)^3\Vert$ for several values of $n$ and numerically find that they exactly fit a cubic equation:
\[
\Vert({\bf A}_n)^3\Vert=\frac{4}{3}n(n+1)(n+2)
\]
We did not find a closed formula for $4\leq q \leq 10$ (although one may very well exist). Taking respectively the 1st, 2nd and 3rd roots of these three formulas, we find that for $q<4$ the approximant to the spectral radius is essentially linear on $n$, providing our first estimated upper bound for $\lambda_\text{max}$. Because we did not find a closed expression for $4\leq q \leq 10$, we take $q=3$ as our `Gelfand's estimate' and we have a conjecture:
\begin{equation}
\lambda_\text{max} \leq \bigg [ \frac{4}{3}n(n+1)(n+2)\bigg]^{1/3}.
\label{gelfand}
\end{equation}
\subsubsection{Bounds on largest eigenvalue based on degree. }In order to improve the bound provided by Gelfand's formula, we now turn to the specific bounds for the largest eigenvalue that exist in the literature.  Some elementary bounds for the largest eigenvalue of a graph $G$ with maximum degree $d_{\text{max}}$ and average degree $\bar d$ \cite{das2004some} are:

\begin{eqnarray}
\label{bound_degree_eq1}
  && \max\{ \bar d, \sqrt{d_{max}}\} \leq \lambda_{\text{max}}\leq d_\text{max} \\
  \label{bound_degree_eq2}
  && \lambda_{\text{max}}\leq \max{\{ \sqrt{d_{i}d_{j}}: 1\leq i,j\leq n, v_{i}v_{j}\in E \}},
\end{eqnarray}
where $E$ is the edge set.
We apply these bounds to $F_n$. We summarise the bounds in the following proposition:
\begin{prop}
Consider $F_n$. Then
\begin{enumerate}[(a)]
\item The largest degree of $F_n$ is found in its central vertex and is $d_{\text{max}}(F_n)=2n$.
\item The vertices with second largest degree are the boundary ones (first and last) and each have degree $n+1$.
\item The average degree is $\bar d(F_n)=4-6/(2^n+1)$
\item $\max{\{ \sqrt{d_{i}d_{j}}: 1\leq i,j\leq n, v_{i}v_{j}\in E \}}=\sqrt{2n(n+1)}$
\end{enumerate}
\label{prop:basic}
\end{prop}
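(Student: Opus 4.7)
The plan is to prove parts (a) and (b) jointly by induction on $n$ using the recursion $F_{n+1}=F_n\otimes F_n$, to derive (c) from Proposition~\ref{prop:initial} by a one-line counting identity, and to reduce (d) to (a)--(b) plus the auxiliary fact that the central vertex of $F_n$ is adjacent to both boundary vertices.

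For (a) and (b), the base case $F_1$ is a triangle, which gives central degree $2=2\cdot 1$ and boundary degree $2=1+1$. For the inductive step I would track degrees through the inflation rule. The new central vertex of $F_{n+1}$ is the merger of the rightmost vertex of the first copy of $F_n$ and the leftmost vertex of the second copy; by the induction hypothesis each of these is a boundary vertex of degree $n+1$, and since the merged vertex inherits all incident edges of both, its degree is $(n+1)+(n+1)=2(n+1)$. The new boundary vertices of $F_{n+1}$ are the outer boundary vertices of the two $F_n$ copies, each with inherited degree $n+1$, to which condition (4) of $\otimes$ adds exactly one further edge between them, giving degree $(n+1)+1$. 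Every remaining vertex retains its degree from the corresponding $F_n$ copy; by the induction hypothesis these are all at most $2n<2(n+1)$, so the claimed ordering of the two largest degrees is preserved.

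Part (c) is immediate from $\bar d(F_n)=2E_n/V_n$ and Proposition~\ref{prop:initial}: indeed $2(2^{n+1}-1)/(2^n+1)=(4(2^n+1)-6)/(2^n+1)=4-6/(2^n+1)$.

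For (d), the product $d_id_j$ over an edge is maximised by an edge connecting the two largest-degree vertices, whenever such an edge exists. By (a) and (b) the two largest degrees are $2n$ at the unique central vertex and $n+1$ at each of the two boundary vertices, and any edge not incident to the central vertex has endpoints of degree at most $n+1$, so that $d_id_j\le (n+1)^2\le 2n(n+1)$ for $n\ge 1$. It therefore suffices to exhibit an edge between the central vertex and a boundary vertex. I would verify this by a short parallel induction showing simultaneously (i) that the two boundary vertices of $F_n$ are adjacent, and (ii) that the central vertex is adjacent to both boundary vertices. Both statements hold trivially in $F_1$; in the step $F_n\to F_{n+1}=F_n\otimes F_n$, statement (i) is re-established by condition (4) of $\otimes$, while statement (ii) propagates because the right boundary vertex of the first copy (adjacent to its left boundary, by (i) applied to $F_n$) is precisely the vertex absorbed into the new central, so the boundary-to-boundary edge of the first copy is inherited as a central-to-boundary edge of $F_{n+1}$, and symmetrically on the other side.

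The only non-routine step is the auxiliary induction for (d); once one keeps track of which $F_n$-edges are inherited by the merged vertex under $\otimes$, the propagation of the central-to-boundary adjacency is essentially forced by the definition of the inflation rule, and the rest of the proposition reduces to the degree bookkeeping described above.
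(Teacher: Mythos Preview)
Your strategy matches the paper's exactly: induction via $\otimes$ for (a)--(b), the identity $\bar d=2E_n/V_n$ for (c), and the central-to-boundary edge for (d). You are actually more careful than the paper in deriving the new central degree as $(n+1)+(n+1)$ from the boundary hypothesis, and in supplying an explicit auxiliary induction for the central-to-boundary adjacency (which the paper asserts only ``by construction'').

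There is, however, a genuine gap in your inductive step for (b), and the paper's proof shares it. From ``every remaining vertex retains degree at most $2n$'' it does \emph{not} follow that the new boundary vertices (degree $n+2$) realise the second-largest degree in $F_{n+1}$: the old central vertices of the two $F_n$-copies are now interior in $F_{n+1}$ and still have degree $2n$, and $2n>n+2$ as soon as $n\ge3$. Concretely, in $F_4$ the interior vertices $4$ and $12$ have degree $6>5=n+1$, so the ``second largest'' clause of (b) is in fact false for $n\ge4$; the induction only establishes that the boundary degree equals $n+1$. This propagates to (d): your claim that ``any edge not incident to the central vertex has endpoints of degree at most $n+1$'' then fails, and since the central of $F_n$ is adjacent to each old $F_{n-1}$-central (your own auxiliary induction, applied inside each copy, yields this), that edge gives $d_id_j=2n\cdot2(n-1)>2n(n+1)$ for $n\ge4$. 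Parts (a) and (c) stand as written.
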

\begin{proof}
First, observe that for $n\geq1$ we have $F_n=F_{n-1}\otimes F_{n-1}$, and in the inflation process the only vertices whose degree increases are the border ones (leftmost and rightmost).\\
\noindent Proofs of (a) and (b) are then by induction on $n$:
For $p=1$ we have that $d_{\text{max}}(F_p)=2$, found in the central vertex, and similarly for the first and last vertex $d=2$ as well. Then,

 - Assume $d_{\text{max}}(F_p)=2p$. For $n=p+1$, by construction we have $F_{p+1}=F_{p}\otimes F_{p}$, so the only vertices that acquire new edges are at the borders of $F_p$. In particular, the central vertex in $F_{p+1}$ is the one acquiring more edges, and by construction this vertex is built merging the rightmost and leftmost vertex of $F_p$, hence the central vertex of $F_{p+1}$ has degree $2p+2p=2(p+1)$. This finishes the proof for (a).
 
 - Assume that the border vertices (leftmost and rightmost) in $F_p$ have degree $p+1$. In $F_{p+1}=F_{p}\otimes F_{p}$, inflation adds an additional edge between the leftmost vertex in the first copy of $F_p$ and the rightmost vertex in the second copy of $F_p$, and therefore the degree for these nodes in $F_{p+1}$ is just $p+1+1$, finishing the proof for (b).
 
Moreover, a proof for (c) directly follows from Proposition \ref{prop:initial} by remarking that $\bar d(F_n)=2 E_n/V_n$.

\noindent Finally, the vertices with largest degree in $F_n$ are the central vertex, with degree $2n$, and the leftmost and rightmost vertices, each of them having degree $n+1$ as previously proved. By construction the central vertex in $F_n$ is always linked with the leftmost and rightmost vertices, hence the identity (d) holds. 
\end{proof}


In summary, we find the following bounds based on the degree for $\lambda_{\text{max}}$:
\begin{eqnarray}
  && \lambda_{\text{max}} \leq \sqrt{2n(n+1)},\label{bound_degree0}\\ 
  && \lambda_{\text{max}}\geq \bar d(F_n)=4-6/(2^n+1), \ \text{for}\ n<9,\\
  && \lambda_{\text{max}}\geq \sqrt{2n}, \ \text{for}\ n\geq 9.
  \label{bound_degree}
\end{eqnarray}

Note that asymptotically the lower bound is already $\sim n^{1/2}$ and is therefore tight, whereas the upper bound $\sqrt{2n(n+1)}$ is still linear and worse than our estimate derived from Gelfand's formula.

\subsubsection{Bounds on largest eigenvalue based on walks. }\label{sec:walkbound} There exists a general bound for $\lambda_{\text{max}}$ based on number of walks on the graph up to order $3$. Let $a(n)$, $b(n)$, $c(n)$ and $d(n)$ be the total number of 3-walks, 2-walks, 1-walks and 0-walks respectively (observe that $d(n)$ is simply the number of vertices and $c(n)$ is just twice the number of edges). 
Then a lower bound is~\cite{EnzoBook}:
\begin{equation}
\lambda_{\text{max}} \geq \frac{d\cdot a - b\cdot c + \sqrt{d^2\cdot a^2 - 6abcd - 3c^2\cdot b^2 + 4(ac^3 + db^3)}}{2(bd - c^2)}.
\label{walks}
\end{equation}

In appendix~\ref{app:walks} we provide a proof for $b(n), c(n)$ and $d(n)$ along with an estimation for $a(n)$. According to these, we state that for $F_n$:
\begin{eqnarray}
&&a(n) = 160\cdot2^n - 4n^3 - 30n^2 - 104n - 158 \label{a}\\  
&&b(n) = 24\cdot 2^n - 2n^2 - 12n - 22 \nonumber \\    
&&c(n) = 4\cdot 2^n - 2  \nonumber \\ 
&&d(n) = 2^n + 1 \nonumber  
\end{eqnarray}

Based on the leading terms of $a,b,c$ and $d$ it is clear that the lower bound in Eq.\ref{walks} is asymptotically constant, and is therefore a very loose bound for large $n$. However the expression is extremely good for small values of $n$, as shown in Fig. \ref{fig:loglogfit}.

Summing up, we have exploited different properties such as spectral radius, degree and walks, and we have obtained several possible bounds accordingly (see Eqs. \ref{gelfand}--\ref{walks}). The best upper bound is the Gelfand estimate (Eq.\ref{gelfand}) which is nonetheless still a loose bound. On the other hand the best lower bound is given by the walks bound (Eq.\ref{walks}) for $n<24$ and by the degree bound (Eqs.\ref{bound_degree}) for $n\geq24$. The scaling of this latter bound seems to be tight. These bounds have been displayed, along with the numerical estimate of $\lambda_{\text{max}}$, in Fig. \ref{fig:loglogfit}.

\subsection{Other spectral properties of $F_n$: The Tree Number} 
The tree number of a graph $G$ is the total number of spanning trees, and we will denote it by $\kappa(G)$. To calculate $\kappa(F_n)$ we make use of Kirchhoff's theorem, or the matrix tree theorem:

\begin{thm}[Kirchhoff's theorem (The Matrix Tree Theorem); \cite{biggs1993algebraic}]
For a given connected graph $G$ with $n$ labeled vertices, let $\mu_1, \mu_2, \dots, \mu_{m-1}$ be the non-zero eigenvalues of its Laplacian matrix $\bf L = D - A$., where $\bf D$ is the degree matrix (a diagonal matrix with vertex degrees on the diagonals). Then the number of spanning trees of $G$ is given by
\begin{equation} 
\kappa(G)=\frac{\prod_{i=1}^m \mu_i}{m}
\end{equation}
\end{thm}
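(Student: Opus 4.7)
The plan is to split the argument into two independent steps: first establish that the number of spanning trees equals any principal cofactor of $\mathbf{L}$ (the classical ``Matrix Tree'' formulation), and then relate such a cofactor to the product of the non-zero eigenvalues divided by $m$.

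For the first step I would orient each edge of $G$ arbitrarily and form the signed incidence matrix $\mathbf{B}$ of size $m \times |E|$, with entries $+1$ at the head, $-1$ at the tail, and $0$ otherwise. A direct computation shows $\mathbf{L} = \mathbf{B}\mathbf{B}^\top$. Fix any vertex $i$, let $\mathbf{B}_i$ be obtained from $\mathbf{B}$ by deleting its $i$-th row, and let $\mathbf{L}_i$ be obtained from $\mathbf{L}$ by deleting row and column $i$; then $\mathbf{L}_i = \mathbf{B}_i \mathbf{B}_i^\top$. The Cauchy--Binet formula gives
\[
\det(\mathbf{L}_i) \;=\; \sum_{S} \bigl(\det \mathbf{B}_i[S]\bigr)^2,
\]
where $S$ ranges over $(m-1)$-subsets of edges. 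The key combinatorial lemma is that $\det \mathbf{B}_i[S] = \pm 1$ whenever $S$ is the edge set of a spanning tree of $G$ and vanishes otherwise; this is proved by induction on $m$ by locating a leaf of the tree $S$, whose corresponding row of $\mathbf{B}_i[S]$ has a single non-zero entry and permits a cofactor expansion that reduces the determinant to that of an incidence minor on one fewer vertex. Substituting yields $\det(\mathbf{L}_i) = \kappa(G)$.

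For the second step, since $G$ is connected, $\mathbf{L}$ is positive semidefinite with a one-dimensional kernel spanned by the all-ones vector, so exactly one eigenvalue vanishes. Writing the characteristic polynomial as
\[
p(\lambda) \;=\; \det(\lambda \mathbf{I} - \mathbf{L}) \;=\; \lambda \prod_{i=1}^{m-1}(\lambda - \mu_i),
\]
the coefficient of $\lambda$ equals $(-1)^{m-1} \prod_{i=1}^{m-1} \mu_i$. On the other hand, the Leibniz expansion identifies this coefficient, up to the same sign, with the sum of all principal $(m-1) \times (m-1)$ minors of $\mathbf{L}$. Because $\mathbf{L}$ has vanishing row and column sums, all $m$ diagonal cofactors $\det(\mathbf{L}_i)$ are equal, so $m \cdot \det(\mathbf{L}_1) = \prod_{i=1}^{m-1} \mu_i$, and combining with Step 1 gives the claimed formula $\kappa(G) = \tfrac{1}{m}\prod_{i=1}^{m-1}\mu_i$.

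The main obstacle is the combinatorial identification of signed $(m-1)\times(m-1)$ incidence minors with spanning trees in Step 1: subsets $S$ containing a cycle automatically give $\det \mathbf{B}_i[S]=0$ by linear dependence of columns, but showing that the remaining (acyclic, spanning) case yields $\pm 1$ requires the inductive leaf argument sketched above. A subtler ancillary point is the equality of all diagonal cofactors of $\mathbf{L}$, which ultimately rests on the fact that adding every other row of $\mathbf{L}$ to a fixed row produces a zero row (and symmetrically for columns), so the $i$-th and $j$-th principal submatrices are related by determinant-preserving operations.
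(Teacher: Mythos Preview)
Your argument is correct and follows the classical route to the Matrix Tree Theorem: the factorisation $\mathbf{L}=\mathbf{B}\mathbf{B}^\top$ via the signed incidence matrix, Cauchy--Binet to express a principal cofactor as a sum over $(m-1)$-subsets of edges, the identification of nonvanishing minors with spanning trees via the leaf-deletion induction, and finally the passage from cofactors to eigenvalues using equality of all diagonal cofactors and the coefficient of $\lambda$ in the characteristic polynomial. Each step is sound as you have outlined it.

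There is, however, nothing to compare against: the paper does not prove this theorem. It is quoted verbatim as a classical result, with attribution to Biggs' \emph{Algebraic Graph Theory}, and is then applied numerically to compute $\kappa(F_n)$ for small $n$. So your write-up supplies a proof where the paper deliberately defers to the literature; the approach you give is essentially the textbook one found in Biggs and elsewhere.
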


We have numerically computed $\kappa(F_n)$ for $n=1, \dots,4$, the results are shown in Table \ref{TreeNumbers}. Interestingly,  \textsc{oeis} states that this sequence (A144621) corresponds to the number of oriented spanning forests of the regular ternary tree with depth $n$ that are rooted at the boundary (i.e., all oriented paths end either at a leaf or at the root), which is given by the recurrence
\begin{equation}\label{alphan}
\alpha_{n+2}=\alpha_{n+1}(3\alpha_{n+1}-2{\alpha^2_n}) \quad  \text{where} \quad \alpha_0=0,\ \alpha_1=1. 
\end{equation}
Hence we conjecture that $\kappa(F_n)=\alpha_{n+1}$.


\begin{table}[]
\centering
\begin{tabular}{|c|c|}
\hline
$n$ & $\kappa(F_n)$   \\ \cline{1-2}
1   & 3                \\
2   & 21               \\
3   & 945              \\
4   & 1845585 \\
\hline      
\end{tabular}
\caption{Tree numbers for $F_n$}
\label{TreeNumbers}
\end{table}

\subsection{Largest eigenvalues $\lambda_{\text{max}}$ for $F_n^k$}
\label{sec:k}
In this section we start by focusing on the set of graphs generated via the concatenation rule $\oplus$ as introduced in definition~\ref{concatenation}. We initially are interested in exploring the role of $k$ in the largest eigenvalue of the adjacency matrix.  To begin with, we fix $n$ and explore (numerically) how $\lambda_{\text{max}}(F_n^k)$ changes as we increase $k$. We have calculated $\lambda_{\text{max}}(F_n^k)$ for $1\leq n \leq 4$ and $1 \leq k \leq10$. 
Results are shown in Fig.~\ref{fig:LarEigF_K}. After a transient growth, we notice that for each $n$ the $\lambda_{\text{max}}$ appears to converge to a finite value as $k$ increases. This observation can be made rigorous: 
\begin{thm}\label{thm:lmax}
Let $n>0$ be fixed and consider the graph $F_n^k$ as $k$ increases. Then the largest eigenvalue of its adjacency matrix converges as $k\to\infty$.
\end{thm}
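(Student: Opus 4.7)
The plan is to show that $\{\lambda_{\max}(F_n^k)\}_{k\geq 1}$ is both monotonically non-decreasing in $k$ and uniformly bounded above, hence convergent by the monotone convergence theorem for real sequences.

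For monotonicity I would appeal to Cauchy's interlacing theorem. The key observation, which follows from a careful reading of Definition~\ref{concatenation}, is that ${\bf A}_n^k$ is a principal submatrix of ${\bf A}_n^{k+1}$. Indeed, in $F_n^{k+1}=F_n^k\oplus F_n$ the first $V_n^k-1$ vertices retain exactly their edges from $F_n^k$, while the block vertex $1''$ (which now plays the role of the last vertex of the first copy) inherits precisely the edges that vertex had in $F_n^k$ together with additional edges that go only into the second copy. Restricting ${\bf A}_n^{k+1}$ to the $V_n^k$ rows and columns indexed by these vertices therefore reproduces ${\bf A}_n^k$, and Cauchy interlacing then gives $\lambda_{\max}(F_n^k)\leq \lambda_{\max}(F_n^{k+1})$.

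For the upper bound I would use the elementary inequality $\lambda_{\max}(G)\leq d_{\max}(G)$ together with Proposition~\ref{prop:basic}. Inside each motif the maximum degree is $2n$, attained at the central vertex. At each concatenation point the block vertex merges two boundary vertices of $F_n$, each of degree $n+1$, giving combined degree $2n+2$. Since the rule $\oplus$ adds no further edges (in particular no edge between the first and last vertex of the concatenated graph), we obtain $d_{\max}(F_n^k)\leq 2n+2$ uniformly in $k$, and consequently $\lambda_{\max}(F_n^k)\leq 2n+2$.

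Putting the two ingredients together, $\{\lambda_{\max}(F_n^k)\}_k$ is non-decreasing and bounded above by $2n+2$, so it converges. I expect the only subtle point to be the careful verification that the subgraph of $F_n^{k+1}$ induced on the first $V_n^k$ vertices coincides exactly with $F_n^k$; this amounts to tracking how the block vertex in $\oplus$ distributes its inherited edges, and once this is settled the remainder of the argument reduces to two standard spectral facts.
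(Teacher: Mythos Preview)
Your proposal is correct and follows essentially the same argument as the paper: both prove convergence by showing the sequence is monotone (via Cauchy interlacing, using that $F_n^k$ sits as an induced subgraph in $F_n^{k+1}$) and bounded above by $d_{\max}(F_n^k)=2(n+1)$. You are in fact slightly more careful than the paper in spelling out why the principal-submatrix hypothesis of the interlacing theorem is satisfied.
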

\begin{proof}
Recall that the largest eigenvalue is bounded by the largest degree of the graph hence in our case, $\lambda_{\text{max}}(F_n^k)\leq d_{\text{max}}$.  Now, the node with the largest degree in $F_n^2$ is the central node, which by construction inherits the edges from the left and right boundary nodes in $F_n$. These boundary nodes have degree $n+1$, hence
$$d_{\text{max}}(F_{n}^2\equiv F_{n}\oplus F_{n})= 2(n+1).$$ 
Adding additional copies of $F_n$ does not change the maximum degree, because only one of the boundary nodes in $F_n^k$ will have their degree increased from $n+1$ to $2(n+1)$. In other words, the node with largest degree is maintained constant as new motifs are concatenated. Therefore $\lambda_{\text{max}}(F_{n})$ is bounded from above. Furthermore, as a consequence of Cauchy's Interlacing Theorem we have that  $\lambda_{\text{max}}(F_{n}^{k+1}) \geq \lambda_{\text{max}}(F_n^k)$. Therefore $\lambda_{\text{max}}(F_n^k)$ is an increasing sequence in $k$, bounded above, hence converges.
\end{proof}

\begin{figure}
\begin{centering}
\includegraphics[scale=0.5]{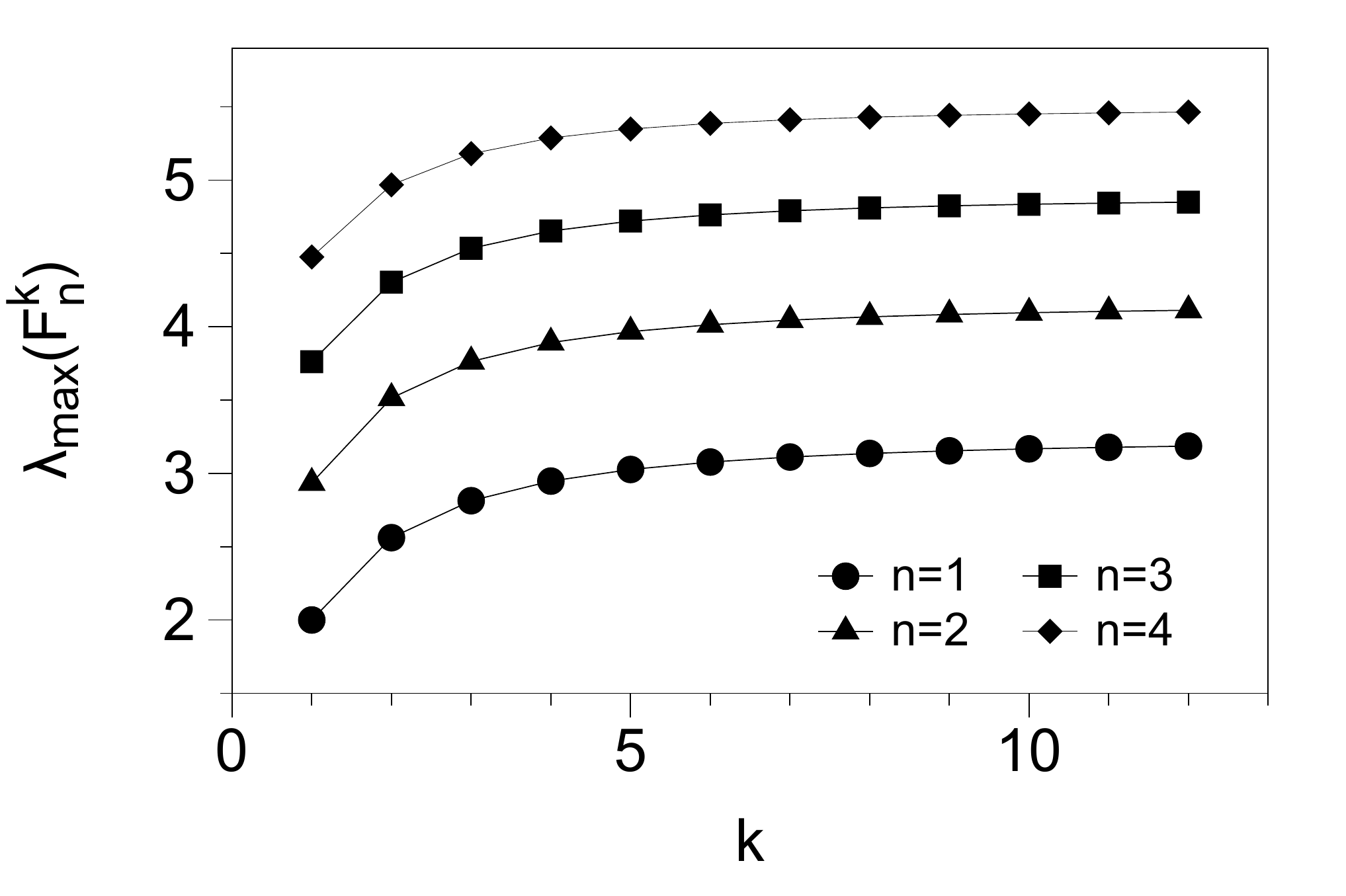}
\par\end{centering}
\protect\caption{\label{fig:LarEigF_K} Plot of $\lambda_{\text{max}}(F_n^k)$ for fixed values of $n$ as a function of $k$. In each case the largest eigenvalue converges to a value independent of $k$, a result proved in theorem~\ref{thm:lmax}.}
\end{figure}

We have now understood the dependence of $\lambda_{\text{max}}(F_n^k)$ on $k$, and we are now in a position to discuss a general expression for the largest eigenvalue in the general case of $F_n^k$. This is a two-parameter discrete function $$\lambda_{\text{max}}(F_n^k): n,k \in \mathbb{N^+}\to \mathbb{R}^+ $$


We summarise the bounds for $\lambda_{\text{max}}$ for general $n$ and $k$ in the following proposition:
\begin{prop}
Consider the graph $F_n^k$, where $n\geq 0, k\geq2$ (the case $k=1$ reduces to $F_n$). Then the following hold:
\begin{enumerate}[(a)]
\item $d_{\text{max}}(F_n^k)=2(n+1)$ (independent of $k$).
\item $\displaystyle {\bar d}(F_n^k)=\frac{2E_n^k}{V_n^k}=\frac{2k(2^{n+1}-1)}{k2^{n}+1}$  (asymptotically independent of $k$).
\item $\max{\{ \sqrt{d_{i}d_{j}}: 1\leq i,i\leq n, v_{i}v_{j}\in E \}}=2n+2$ (independent of $k$).
\end{enumerate}
\end{prop}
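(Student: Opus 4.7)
The strategy is to reduce everything to degree bookkeeping for $F_n$ (already supplied by Proposition \ref{prop:basic}) together with a careful accounting of what the operation $\oplus$ does to individual vertices.

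For part (a), I would proceed by induction on $k$. For $k=1$ we have $F_n^1 = F_n$ with $d_{\text{max}} = 2n$ attained at the central vertex, by Proposition \ref{prop:basic}(a). In the step $F_n^{k+1} = F_n^k \oplus F_n$, the only vertex whose degree is altered is the newly merged junction vertex: it inherits the rightmost-boundary edges of $F_n^k$ (contributing $n+1$ by Proposition \ref{prop:basic}(b)) together with the leftmost-boundary edges of the incoming $F_n$ (another $n+1$), yielding degree $2(n+1)$. All previously existing vertices keep their degree; in particular, each interior copy of $F_n$ keeps a central vertex of degree $2n$, and the two outermost vertices of $F_n^{k+1}$ keep degree $n+1$. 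Since $2(n+1) > \max\{2n, n+1\}$ for every $n \ge 0$, the maximum degree is $2(n+1)$ regardless of $k$.

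Part (b) is an immediate substitution. From Proposition \ref{prop:initial} one has $V_n^k = k\,2^n + 1$ and $E_n^k = k(2^{n+1}-1)$, so
\[
\bar d(F_n^k) = \frac{2E_n^k}{V_n^k} = \frac{2k(2^{n+1}-1)}{k\,2^n + 1}.
\]
Letting $k \to \infty$ this tends to $2(2^{n+1}-1)/2^n = 4 - 2^{1-n}$, confirming asymptotic independence of $k$.

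The substantive statement is (c). The trivial upper bound $\max\sqrt{d_i d_j} \le d_{\text{max}} = 2(n+1)$ follows at once from (a), so the real content is the matching lower bound, i.e. exhibiting two \emph{adjacent} vertices both of maximum degree. The plan is to single out any interior copy of $F_n$ sitting in the chain $F_n^k$: both the leftmost and rightmost vertices of such an interior copy are junction vertices of $F_n^k$, so both carry degree $2(n+1)$ by the analysis of (a). They are moreover adjacent because the edge between the leftmost and rightmost of $F_n$ is introduced by clause~(4) of the inflation operation when writing $F_n = F_{n-1}\otimes F_{n-1}$, and this edge is preserved verbatim under any subsequent application of $\oplus$ (which only modifies incidences at the junction vertex). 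The one technical point is a short induction on $n$ confirming that this inflation edge is present in every $F_n$ with $n \ge 1$; this is the main, but essentially bookkeeping, obstacle in the argument, and follows directly from the definitions. The two junction vertices then realize $\sqrt{d_i d_j} = 2(n+1) = 2n+2$, matching the upper bound and completing the proof.
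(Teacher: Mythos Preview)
Your arguments for (a) and (b) are correct and essentially the same as the paper's: the paper simply cites Theorem~\ref{thm:lmax} for (a) and Proposition~\ref{prop:initial} for (b), and you have reproduced the relevant degree bookkeeping in a bit more detail.

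For (c), your argument is actually more careful than the paper's (which just asserts that multiple maximum-degree vertices exist and are adjacent), but it has a gap at $k=2$---and this gap reflects a genuine defect in the statement, not just in your write-up. Your proof hinges on choosing an \emph{interior} copy of $F_n$, one whose left and right boundary vertices are \emph{both} junction vertices of the chain. Such a copy exists only when $k\ge 3$. When $k=2$ there is a single junction vertex in $F_n^2$, and it is the unique vertex of degree $2(n+1)$; every neighbour of it has degree at most $2n$ (the central vertex of either copy of $F_n$). Hence for $k=2$ one gets
\[
\max_{(i,j)\in E}\sqrt{d_i d_j}=\sqrt{2(n+1)\cdot 2n}=2\sqrt{n(n+1)}<2n+2,
\]
so the claimed value $2n+2$ is not attained. (For a concrete check: in $F_1^2$ the degree sequence is $2,2,4,2,2$ and the maximum over edges is $\sqrt{8}$, not $4$.) The paper's one-line proof makes the same tacit assumption of ``multiple nodes with maximum degree'', which fails for $k=2$. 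Your argument is therefore correct for $k\ge 3$, and you should flag that the $k=2$ case of (c) as stated is in fact false.
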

\begin{proof}
Proposition (a) comes from Theorem~\ref{thm:lmax}\\
Proposition (b) comes from Prop.~\ref{prop:initial} along with the fact that the average degree of a graph is twice the number of edges divided by the number of nodes\\
Proposition (c) is trivially proved by observing that for $k\geq2$, $F_n^k$ has multiple nodes with maximum degree; these nodes are always connected and have degree $2(n+1)$.
\end{proof}
These results provide the following bounds, which are equivalent to the bounds to Eqs.\ref{bound_degree0}--\ref{bound_degree} but in the general case where $k\geq 2$:
\begin{eqnarray}
  && \lambda_{\text{max}}(F_n^k) \leq 2(n+1),\\
  && \lambda_{\text{max}}\geq \bar d(F_n^k)=\frac{2k(2^{n+1}-1)}{k2^{n}+1}, \ \text{if}\ n<7,\\
  && \lambda_{\text{max}}\geq \sqrt{2(n+1)}, \ \text{if}\ n\geq 7.
  \label{bound_degree_general}
\end{eqnarray}
Additionally, we were able to estimate the number of walks (as explained in Appendix~\ref{app:walks}) in the case of $F_n^k$, yielding:
\begin{eqnarray}
&&a(n,k) = (320\cdot2^n - 4n^3 - 48n^2 - 196n - 312) + (k - 2)(160\cdot2^n - 16n^2 - 88n - 152) \label{a_general}\nonumber \\  
&&b(n,k) = 24\cdot 2^n - 2n^2 - 12n - 22 + (k - 1)(24\cdot2^n - 8n - 20)\nonumber \\    
&&c(n,k) = 4k\cdot2^n - 2k  \nonumber \\ 
&&d(n,k) = k\cdot2^n + 1 \nonumber  
\end{eqnarray}

\subsection{The complete spectrum of $F_n^k$: tridiagonal $n$-block Toeplitz matrices.}
\label{sec:altogether}



We now turn our attention to the adjacency matrices of $F_n^k$ and their particular form. As a preamble, observe that the concatenation operation $\oplus$ that generates $F_n^{k}$ from $F_n$ is in some sense `close' to a direct sum. We recall that the direct sum of a matrix {\bf A} with itself is the matrix formed by placing {\bf A} as two non-overlapping diagonal blocks. The eigenvalues of the direct sum of two copies of the same matrix  {\bf A} are just the eigenvalues of {\bf A} (with twice the multiplicity in each case). If we `approximate' $\oplus$ as just being the direct sum operation, then trivially the eigenvalues of $F_n^k$ would be the same as the eigenvalues of $F_n$. In particular, $\lambda_{\text{max}}$ would be fully independent of $k$. Of course, $\oplus$ is not a direct sum, however $\lambda_{\text{max}}(F_n^k)$ is independent of $k$ in the limit $k\to\infty$. With a bit of hand-waving, we could say that the larger $n$, the `closer' $\oplus$ is to a direct sum and therefore the more independent the spectrum is from $k$.

We start now our analysis by fixing $n$ and letting $k$ increase. For $n=0$, $F_0^k$ is trivially a 2-regular chain whose adjacency matrix $\textbf{A}_0^k$ whose structure is tridiagonal Toeplitz:

\[
\textbf{A}_0^k= 
\begin{bmatrix}
0    & 1        & 0                  &             &0            & 0  \\
1    & 0        & 1                 & \ddots  &             & 0 \\
0   & 1          & \ddots     & \ddots   & \ddots &    \\
      & \ddots & \ddots     & \ddots &1        & 0  \\
0    &             & \ddots    &1                 & 0       & 1 \\
0  & 0           &                   & 0            & 1      & 0\\
\end{bmatrix}
\]

Accordingly, through direct calculation of the adjacency matrix, we can express the spectrum in closed form $$\text{spec}(F_0^k)=\{2\cos \frac{h\pi}{k+2}, \ h=1,\dots,k+1\}.$$
A plot of this spectrum for $k=2^{10}$ is shown in \ref{fig:EigF_0}. In particular, as $\cos(x)$ monotonically decreases on $[0,\pi]$, the largest value is found for $h=1$ and thus $$\lambda_{\text{max}}(F_0^k)=2\cos \frac{\pi}{k+2},$$
hence we have that $\lim_{k\to\infty}\lambda_{\text{max}}(F_0^{k})=2$. 

\begin{figure}
\begin{centering}
\includegraphics[scale=0.7]{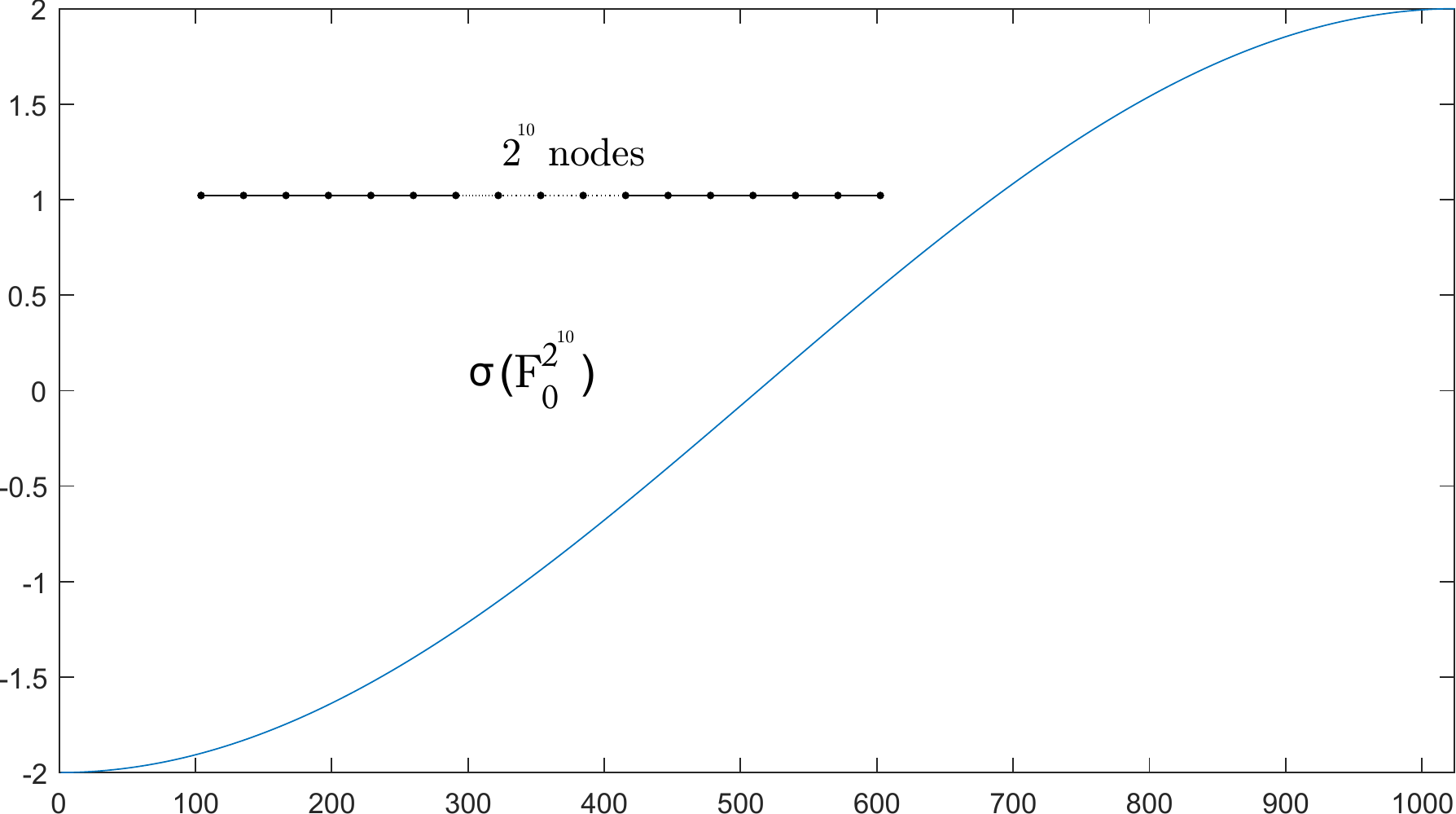}
\par\end{centering}
\protect\caption{\label{fig:EigF_0}The spectrum of $F_0^{2^{10}}$.}
\end{figure}

\begin{figure}
\begin{centering}
\includegraphics[scale=0.7]{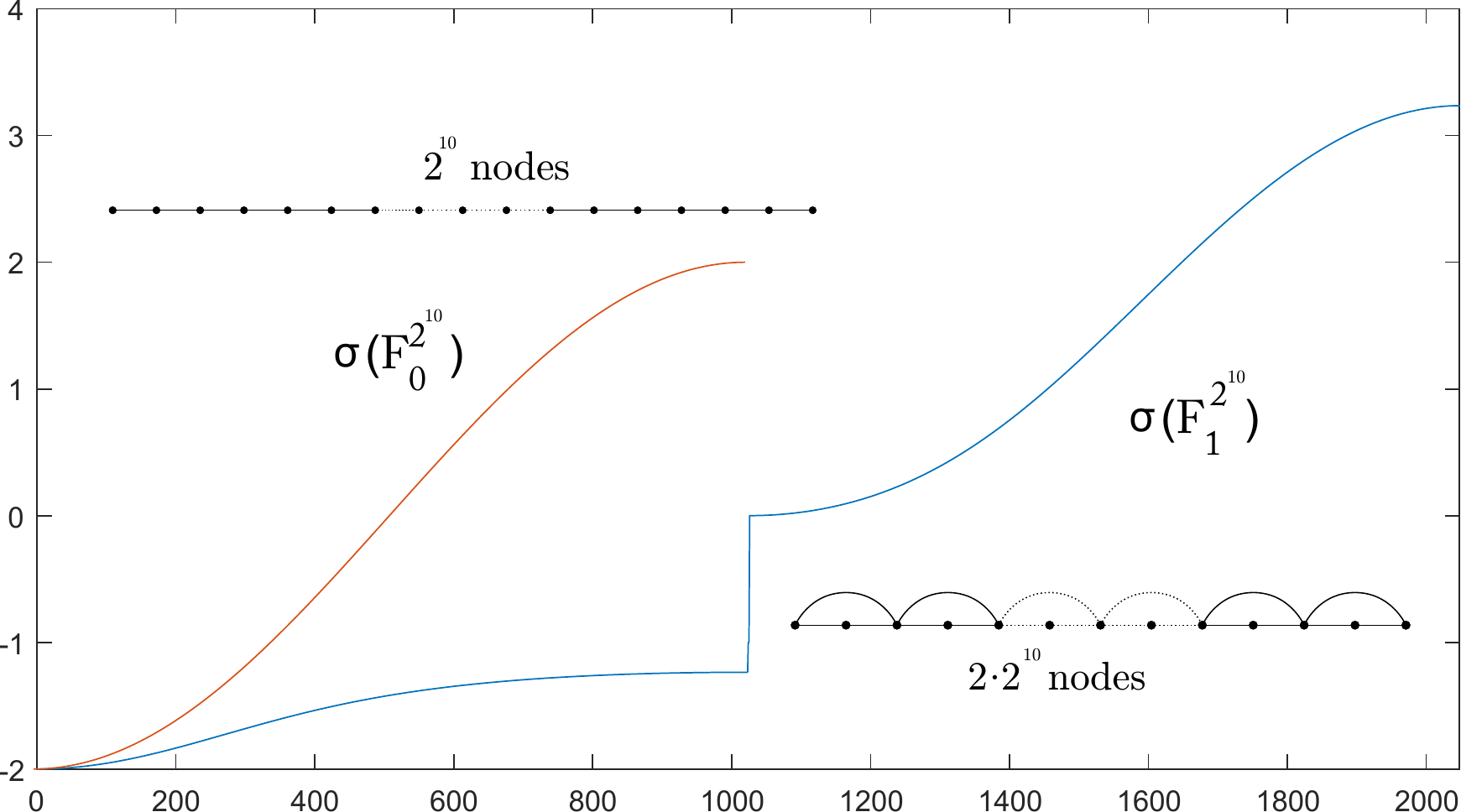}
\par\end{centering}
\protect\caption{\label{fig:EigF_0F_1}The spectrum of $F_0^{2^{10}}$ and $F_1^{2^{10}}$.}
\end{figure}

For $n=1$, $F_1^k$, the adjacency matrix is no longer tridiagonal Toeplitz anymore, however it can be expressed as a tridiagonal \textit{block} Toeplitz matrix of the shape

\[ \left( \begin{array}{ccccc}
a & b & 0 & \dots & \\
b^T & a & b & 0 & \dots \\
0 & b^T & a & b & 0 \end{array} \right)\] 
where 
\[ a=\left( \begin{array}{cc}
0 & 1 \\
1 & 0
\end{array} \right),\ 
b=\left( \begin{array}{cc}
1 & 0 \\
1 & 0
\end{array} \right),\ 
b^T=\left( \begin{array}{cc}
1 & 1 \\
0 & 0
\end{array} \right).
\]

This is a special type of tridiagonal block Toeplitz matrix. In general, if we look at the adjacency matrix ${\bf A}_n^k$ associated to $F_n^k$, there exists a self-similar process underlying the construction of ${\bf A}_n^k$ in terms of ${\bf A}_{n-1}^k$. For instance, ${\bf A}_0^k$  is just a tridiagonal Toeplitz matrix with null diagonal elements. Now, ${\bf A}_1^k$ is not tridiagonal nor Toeplitz anymore as we have seen, but we recover a tridiagonal Toeplitz shape if we consider blocks $2\times 2$ as the elements of this new matrix, or equivalently ${\bf A}_1^k$ is a tridiagonal block Toeplitz matrix. Similarly, ${\bf A}_2^k$ is no longer a tridiagonal block Toeplitz matrix, but if we consider that the elements of ${\bf A}_2^k$ are blocks of blocks ($2\times 2$ matrices whose elements are in turn blocks), then in the structure of ${\bf A}_2^k$ is again tridiagonal Toeplitz (we may call it tridiagonal superblock, or 2-block Toeplitz). For instance, the structure of ${\bf A}_2^k$ can be expressed as
\[ \left( \begin{array}{ccccc}
A & B & 0 & \dots & \\
B^T & A & B & 0 & \dots \\
0 & B^T & A & B & 0 \end{array} \right)\] 
where 
\[ A=\left( \begin{array}{cc}
a & b \\
b^T & a
\end{array} \right),\ 
B=\left( \begin{array}{cc}
c & 0 \\
b & 0
\end{array} \right),\ 
B^T=\left( \begin{array}{cc}
c & b^T \\
0 & 0
\end{array} \right),\ 
c=\left( \begin{array}{cc}
1 & 0 \\
0 & 0
\end{array} \right)
\]

This process can be applied iteratively and hence we can show that  ${\bf A}_n^k$ has a tridiagonal $n$-block Toeplitz structure. In this case, an $n$-block is equivalent to a $2^n\times 2^n$ block. In other words, a tridiagonal $n$-block Toeplitz matrix is equivalent to a tridiagonal block Toeplitz matrix where each block is indeed a $2^n\times 2^n$ matrix. We weren't able to find such shape in the literature but we speculate that the set of symmetries present in the recursive bulding of  ${\bf A}_n^k$ could be exploited to extract properties about its spectrum. Additionally, in Figure \ref{fig:EigF_0F_1} we plot the spectrum of $F_0^{2^{10}}$ and $F_1^{2^{10}}$. For the spectrum of $F_1^{2^{10}}$, we notice two distinct curves separated by a discontinuity, with a length of approximately $2^{10}$ each. Each of the two curves look appropriate rescalings of the spectrum of $F_0^{2^{10}}$. The same pattern can be observed in $F_n^{2^{10}}$ for $n=2, 3, 4$ with the spectrum of $F_n^{2^{10}}$ having $2^n$ distinct curves, each separated by a jump. We conjecture that for a fixed $k$,  the spectrum of $F_n^{k}$ consists of $2^n$ distinct curves.\\ 
Finally, in figure \ref{universal} we plot the complete point spectrum of $F_{10}$, and compare it with $F_n^k$ with the same number of nodes: $F_8^4$, $F_6^{16}$, $F_4^{64}$ and $F_2^{256}$. We can see how for small $n$ the distinct curves are very obviously separated by discontinuities, and these smear out as $n$ increases. The spectrum seems to converge to a somewhat universal shape. We conjecture that this self-similar process is reminiscent of the recursive way of building the $n$-block tridiagonal Toeplitz adacency matrices, and we leave this as an open problem.

\begin{figure}
\begin{centering}
\includegraphics[scale=0.45]{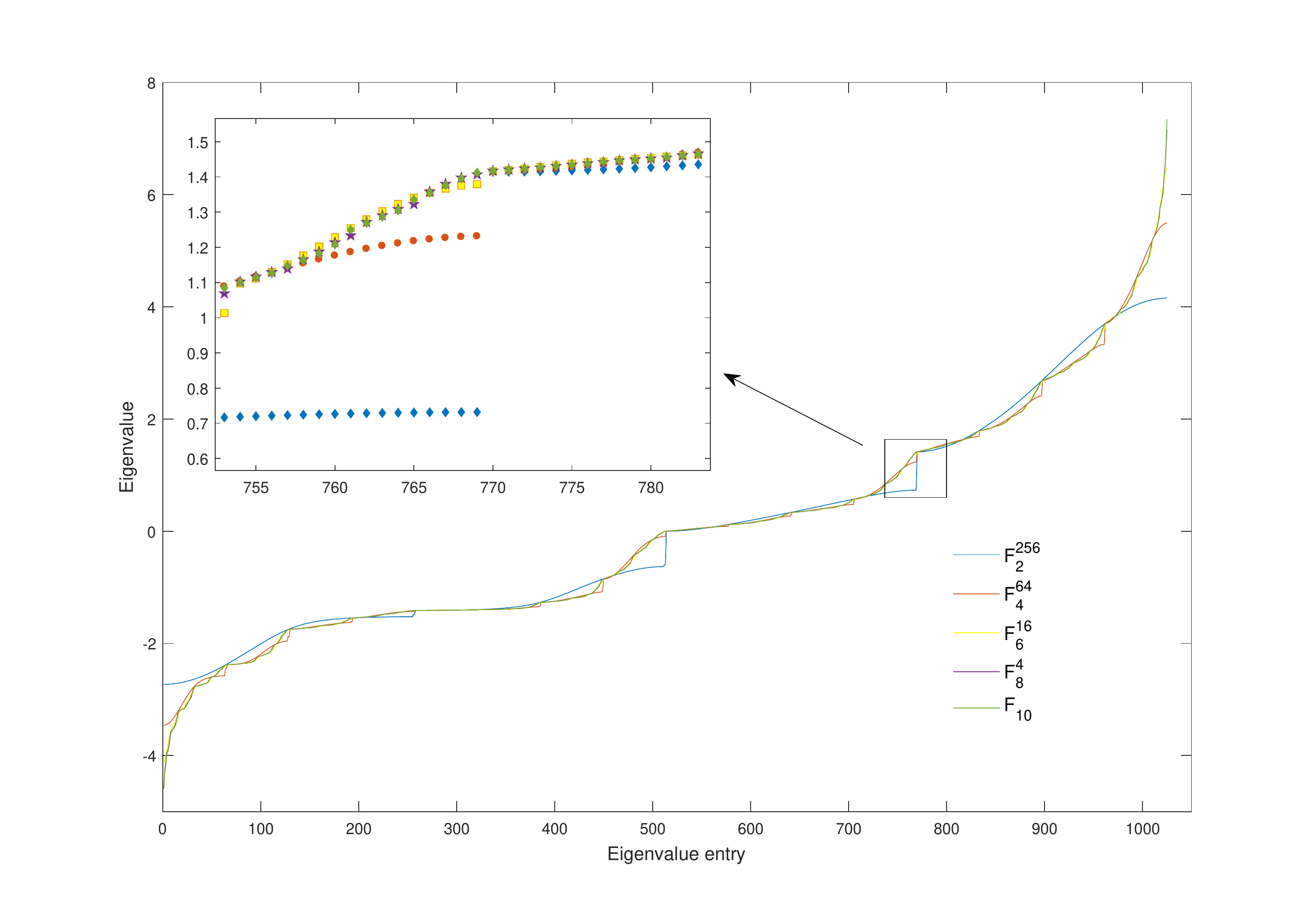}
\par\end{centering}
\protect\caption{Rescaled curves for $F_n^k$, where we show that, when comparing graphs with the same number of nodes, the spectrum collapses to a universal shape.}
\label{universal}
\end{figure}

\subsection{Determinant of Feigenbaum Graphs}
We close this section on the properties of $F_n^k$ by exploring the determinant of $F_n^k$, which is defined as the determinant of the adjacency matrix ${\bf A}_n^k$. We outline and prove the following theorem:

\begin{thm} \label{thm_ryan}
The determinants of $F_{n}^k$ satisfy

\[
det(F_{n})=\begin{cases}
1, & n=0\\
2, & n=1\\
-2, & \forall n\geq2.
\end{cases}
\]

Moreover, for $k\geq1$ and $n\geq2$ we have

\[
det(F_{n}^{k})=-2k
\]

\end{thm}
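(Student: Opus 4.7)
The plan is to prove both assertions by induction: the first by induction on $n$ using the inflation recursion $F_{n+1} = F_n \otimes F_n$, and the second by induction on $k$ using the concatenation recursion $F_n^{k+1} = F_n^k \oplus F_n$. The small cases ($n=0,1,2$ and $k=1$) are disposed of by writing down the corresponding adjacency matrices explicitly and evaluating their (small) determinants by hand.

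For the inductive step on $\det(F_n)$, the key observation is that $F_{n+1}$ differs from the pure vertex-amalgamation $F_n \oplus F_n$ only by the single extra edge between the leftmost and rightmost vertices. I would therefore invoke Schwenk's classical identity for the characteristic polynomial of a graph $G = G_1 \vee_v G_2$ obtained by identifying a vertex $v$ of $G_1$ with a vertex of $G_2$: after evaluating at $x=0$ and tracking the $(-1)^{|V|}$ factors, this yields
\[
\det(\mathbf{A}_{G_1 \vee_v G_2}) = \det(\mathbf{A}_{G_1})\det(\mathbf{A}_{G_2 - v}) + \det(\mathbf{A}_{G_1 - v})\det(\mathbf{A}_{G_2}).
\]
Applying this with $G_1 = G_2 = F_n$ yields $\det(\mathbf{A}_{F_n \oplus F_n})$, and the remaining edge of $F_{n+1}$ is then accounted for by a rank-two update via the matrix determinant lemma, written as $\mathbf{A}_{n+1} = \mathbf{A}_{F_n \oplus F_n} + e_ae_b^T + e_be_a^T$ where $e_a, e_b$ single out the extreme vertices. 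The output is a recursion for $\det(\mathbf{A}_{n+1})$ in terms of $\det(\mathbf{A}_n)$ and the auxiliary quantity $e_n := \det(\mathbf{A}_{F_n - v})$ (with $v$ a boundary vertex). I would run a simultaneous induction establishing $e_n = 1$ for all $n \geq 2$; substituting $d_n = -2$ and $e_n = 1$ then collapses the recursion and closes the step.

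For the second claim the situation is cleaner because $F_n^{k+1} = F_n^k \oplus F_n$ is a \emph{genuine} vertex amalgamation (no extra edge), so Schwenk's identity applies directly. Writing $d_n^k := \det(\mathbf{A}_{F_n^k})$ and $e_n^k := \det(\mathbf{A}_{F_n^k - v})$ for $v$ the boundary vertex inherited by $F_n^{k+1}$, the formula reads
\[
d_n^{k+1} = d_n^k\, e_n + e_n^k\, d_n.
\]
Plugging in $d_n = -2$ and $e_n = 1$ (for $n \geq 2$, from Part 1), together with the auxiliary claim $e_n^k = 1$ for all $k \geq 1$ — which is itself established by induction on $k$ using the same amalgamation identity applied to $F_n^k - v = F_n^{k-1} \oplus (F_n - v)$ — the recursion simplifies to $d_n^{k+1} = d_n^k - 2$. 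Combined with the base case $d_n^1 = -2$, this telescopes to $d_n^k = -2k$ as desired.

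The hard part will be controlling the cascade of auxiliary determinants of \emph{vertex-deleted} graphs, namely $\det(\mathbf{A}_{F_n - v})$ and $\det(\mathbf{A}_{F_n^k - v})$, because deleting a boundary vertex destroys the Feigenbaum structure and one cannot simply re-use Proposition~\ref{prop:basic} or the earlier propositions. My expectation is that one has to set up a triple simultaneous induction on $d_n$, $e_n$, and possibly $f_n := \det(\mathbf{A}_{F_n - \{v,v'\}})$ (both boundaries deleted), verifying by symmetry arguments together with further applications of Schwenk's identity that these auxiliary sequences stabilise to simple constants ($\pm 1$ or $\pm 2$). Once the invariants are identified, the linear recursion for $d_n^k$ in $k$ falls out essentially for free; the bookkeeping of the auxiliary sequences at the level of $F_n$ is where the genuine work lies.
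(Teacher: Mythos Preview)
Your approach is genuinely different from the paper's. The paper does not recurse on $n$ or $k$ at all: it invokes Harary's formula
\[
\det(A)=\sum_{H}(-1)^{r(H)}2^{c(H)},
\]
summed over spanning elementary subgraphs $H$, and then classifies those subgraphs of $F_n$ directly. Because $F_n$ is outerplanar with a very constrained cycle structure, every spanning elementary subgraph turns out to contain exactly one cycle (the outer Hamiltonian cycle or one of the nested ``big/small'' cycles, the remaining vertices being matched along the Hamiltonian path), so $c(H)=1$ throughout; a parity count on the number of edge-components then collapses the sum to $-2$. For $F_n^k$ the paper merely says ``it can easily be checked'', so your Schwenk-based recursion for the concatenation part is in fact more explicit than what the paper offers, and it works exactly as you describe: with $e_n=1$ and $d_n=-2$ one gets $d_n^{k+1}=d_n^k-2$.

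Where your outline has a genuine gap is the inflation step $F_n\to F_{n+1}$. The triple $(d_n,e_n,f_n)$ does close for $e_n$ and $f_n$ (one finds $e_{n+1}=e_n^2+d_nf_n$ and $f_{n+1}=2e_nf_n$, stabilising at $(1,0)$), but it does \emph{not} close for $d_{n+1}$. Writing $A_{n+1}=A_{F_n^2}+e_1e_N^{T}+e_Ne_1^{T}$ and applying the matrix determinant lemma, the correction involves not just the corner \emph{principal} minors of $A_{F_n^2}$ (which are vertex deletions and reduce to $e_n,f_n$) but also the off-diagonal entry $(A_{F_n^2}^{-1})_{1N}$, which is a non-principal cofactor and is not any $\det(A_{F_n-S})$. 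Equivalently, the Schwenk edge formula produces a sum over all cycles through the new edge; since every such cycle passes through the centre of $F_n^2$, this sum factorises and forces a fourth invariant
\[
g_n:=\sum_{\substack{P:\,0\to 2^n\\ \text{path in }F_n}}(-1)^{|V(P)|}\det\bigl(A_{F_n-V(P)}\bigr),
\]
giving $d_{n+1}=2d_ne_n-2e_nf_n+2g_n^{2}$. This $g_n$ has its own recursion $g_n=f_n-g_{n-1}^{2}$, hence $g_n\equiv-1$ for $n\ge 2$, and then everything closes to $d_{n+1}=-4-0+2=-2$. So your programme is viable, but the bookkeeping is one layer deeper than you anticipated; the paper's Harary argument sidesteps all of this by exploiting outerplanarity to enumerate the subgraphs directly.
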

\begin{proof}
For $k=1$, we can directly calculate $det(F_{0})$ and $det(F_{1})$. To push beyond this is a little bit more difficult. It is too tricky to directly calculate the determinant of the adjacency matrices, despite them having a recursive form. We then follow a graph theoretical proof, for which we will have to state a definition and a well-known theorem, and then state and prove two lemmas.\\
We start by defining a {\it spanning elementary subgraph} \cite{biggs1993algebraic}:

\begin{defn} [Spanning Elementary Subgraph]
An elementary subgraph is a simple subgraph, each component of which is regular and has degree $1$ or $2$, i.e., each component is either a single edge or a cycle. A spanning elementary subgraph (S.E.S) of a graph $G$ is an elementary subgraph which contains all vertices of $G.$
\end{defn}

We now make use of the following Theorem:

\begin{thm}[Harary 1962; \cite{biggs1993algebraic}]
Let $A$ be the adjacency matrix of a graph $G$, let $v$ be the number of vertices, $e$ the number of edges and $l$ the number of components. Then 

\begin{equation}
det(A)=\sum_{H}(-1)^{r(H)}2^{c(H)}\label{eq:det}
\end{equation}

where the summation is over all spanning elementary subgraphs H of G, $r(H)=v-l$ is the rank of $H$ and $c(H)=e-v+l$ is the co-rank.
\end{thm}

We note that the co-rank of an elementary subgraph is just the number of cycles in the graph. Our task is thus to find all the spanning elementary subgraphs with their corresponding ranks and co-ranks. 

\begin{lem}
We have only two configurations for elementary subgraphs (for $n\geq3$). The first is just the cycle containing all vertices (shown as the "largest cycle" in Fig.~\ref{Cycles}) and the second is any other cycle from Fig.~\ref{Cycles} with the remaining nodes connected by single edges
\label{lema1}
\end{lem}

\begin{proof}
Without loss of generality we consider the structure of $F_{3}$ only. Because of the recursive property of the Feigenbaum graphs, all the arguments used here can be applied directly to any $F_{n}$ with $n\geq2$. 

\begin{figure}
\begin{centering}
\includegraphics[scale=0.4 ]{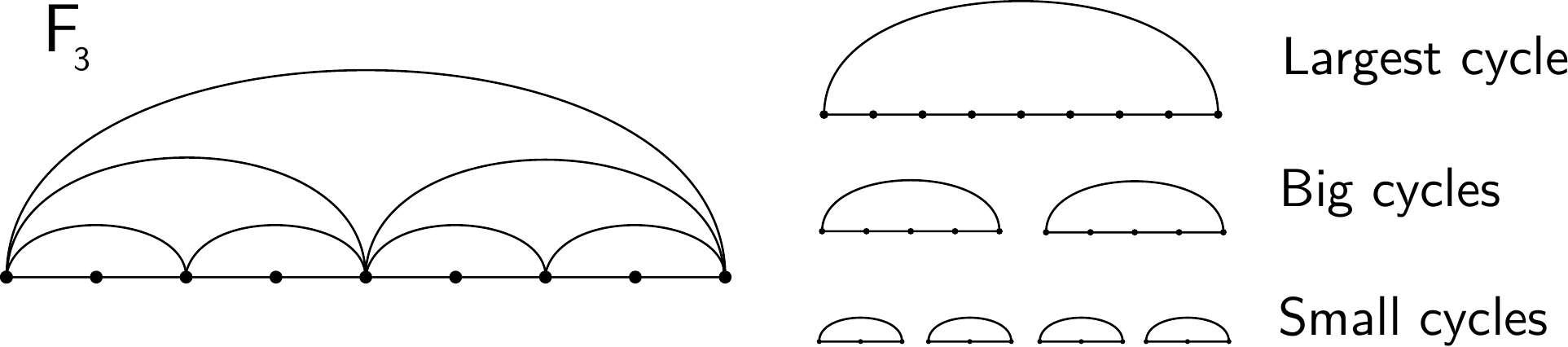}
\par\end{centering}
\protect\caption{\label{Cycles}Diagram showing $F_3$ with the relevant cycles. The largest cycle is taken as the Hamiltonian path along with the edge connecting the first and last vertices. The big and small cycles are taken by following the Hamiltonian path but taking an edge back to the starting node.}
\end{figure}

The largest cycle (created by taking only the outside edges of the outerplanar graph) is shown in Fig. \ref{Cycles}. The co-rank of this subgraph is $1$ (since the only component is a cycle, and the co-rank is the number of cycles) and the rank is even as the number of vertices in any Feigenbaum graph is always odd, and we only have one component). Thus this subgraph contributes $2$ in the sum (Eq. \ref{eq:det}) of the determinant of $F_{3}$. This is true for any $n$, i.e., the largest cycle contributes 2 towards the determinant.

We can construct other spanning elementary subgraphs by taking any other cycle (big or small as in Fig.~\ref{Cycles}) and joining the remaining bottom edges. As such cycles always contain an odd amount of vertices, we are always left with an even amount of vertices on the bottom, which permits us to join the rest of the vertices with single edges. Such spanning elementary subgraphs, for $F_3$ are shown in Fig.~\ref{fig:Six-other-configurations}.

\begin{figure}
\begin{centering}
\includegraphics[scale=0.5 ]{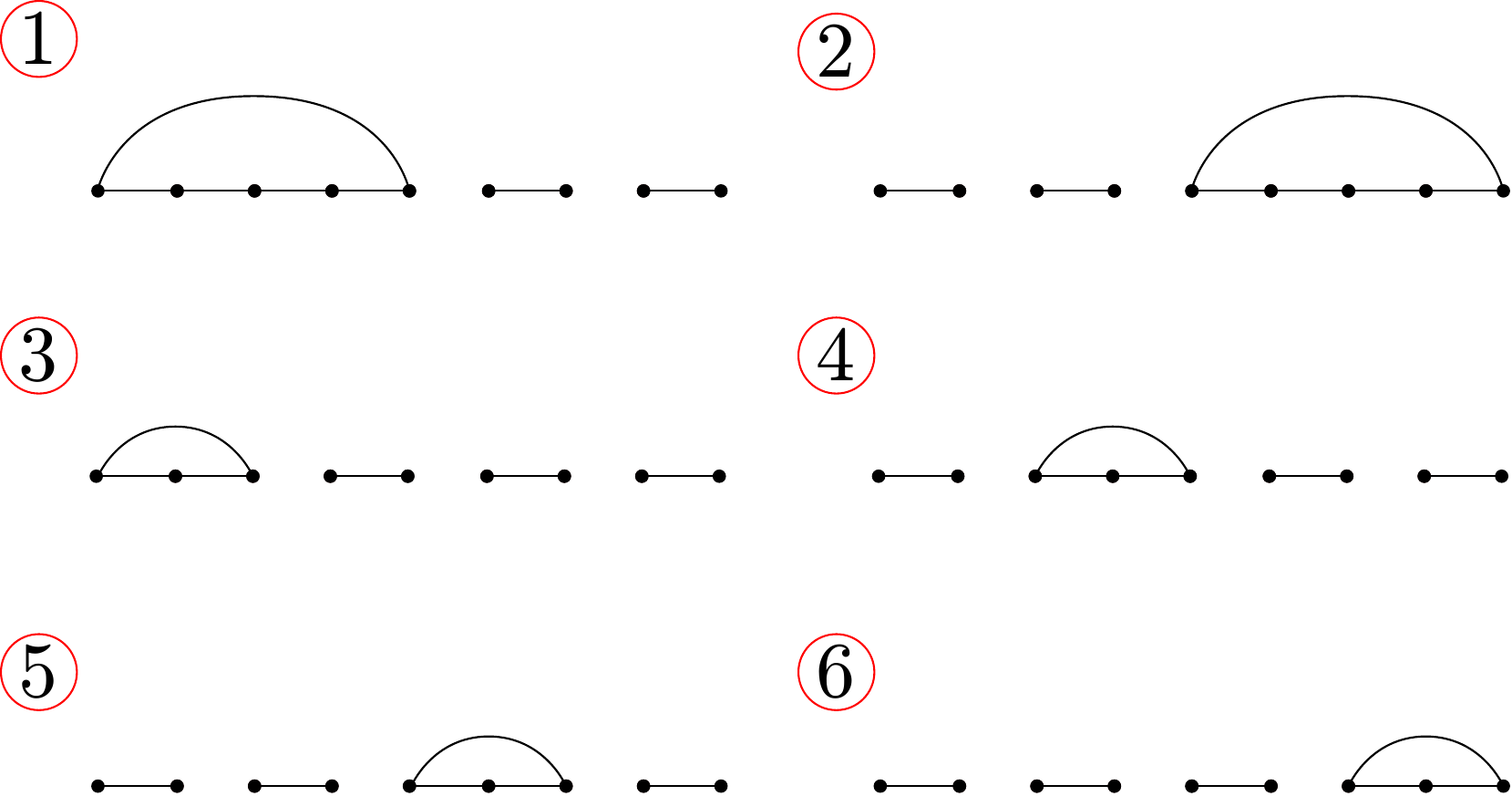}
\par\end{centering}
\protect\caption{\label{fig:Six-other-configurations}Six other configurations for
elementary spanning subgraphs of $F_{3}$ (other than the largest cycle).}
\end{figure}

We stipulate the following: \emph{we cannot have more than one cycle in any elementary subgraph}. This is because if we take two cycles in our elementary subgraph, we will be left with an odd amount of vertices. An odd amount of vertices cannot be joined only by single edges, thus we would require another cycle to give us an even amount of vertices. However, because of the construction of the Feigenbaum graphs, this will leave us with an odd amount on either side of one of the cycles, and this process repeats until we are left with a single node that cannot by introduced in to any spanning elementary subgraph.

By the same reasoning, any other cycles considered which are not listed in Fig.~\ref{Cycles} (for example in $F_3$, taking the triangle formed by the 1st, 3rd and 5th nodes) will again leave us with an odd amount of nodes, the first of which (by ordering the remaining nodes and numbering them left to right starting with 1) can only be connected by a single edge to the next node. This process repeats until we are left with a single node that cannot be introduced in to any spanning elementary subgraph.

Thus each spanning elementary subgraph contains only one of our big or small cycles, and each big or small loop corresponds to exactly one spanning elementary subgraph. The co-rank of all elementary subgraphs of all $F_{n}$ is therefore equal to 1. This concludes the proof of Lemma \ref{lema1}.
\end{proof}

\begin{lem}
\label{lem:Spanning-elementary-subgraphs}Spanning elementary subgraphs consisting of a big cycle have an even number of single edges components. S.E.S's consisting of a small cycle have an odd number of single edges components.
\end{lem}

\begin{proof}
The number of vertices of $F_{n}$ is $2^{n}+1$. The number of vertices in one of the big or small cycles is $2^{k}+1$ where $1\leq k<n$. The number of vertices remaining whe we add a cycle component (a loop) to a S.E.S is $2^{n}+1-(2^{k}+1)=2^{n}-2^{k}$ and the number of single edge components is 
\[
\frac{2^{n}-2^{k}}{2}=2^{n-1}-2^{k-1}
\]

which is even if and only if $k\neq1,$ i.e., only for big cycles. This concludes the proof of Lemma \ref{lem:Spanning-elementary-subgraphs}.
\end{proof}

Going back to our formula in Eq. \ref{eq:det}, we have $c(H)=1$, therefore

\[
det(A)=\sum_{H}2(-1)^{r(H)}
\]

In Lemma. \ref{lem:Spanning-elementary-subgraphs} we proved that $r(H)$ is even for E.S.Gs containing big cycles and odd for small. Using simple combinatoric arguments shown in Fig. \ref{fig:Diagram-showing-each}, we have for $n\geq2,$

\[
det(F_{n})=-2^{n}+\sum_{k=1}^{n-1}2^{k}=-2
\]

For $k>1$ the big and small cycles give slightly different contributions to the determinant, similar to Fig. \ref{fig:Diagram-showing-each}, however the number of big and small cycle is different. It can easily be checked that
\[
det(F_{n}^{k})=-2k,
\]
which concludes the proof of Theorem \ref{thm_ryan}.
\end{proof} 
\begin{figure}
\begin{centering}
\includegraphics[scale=0.5]{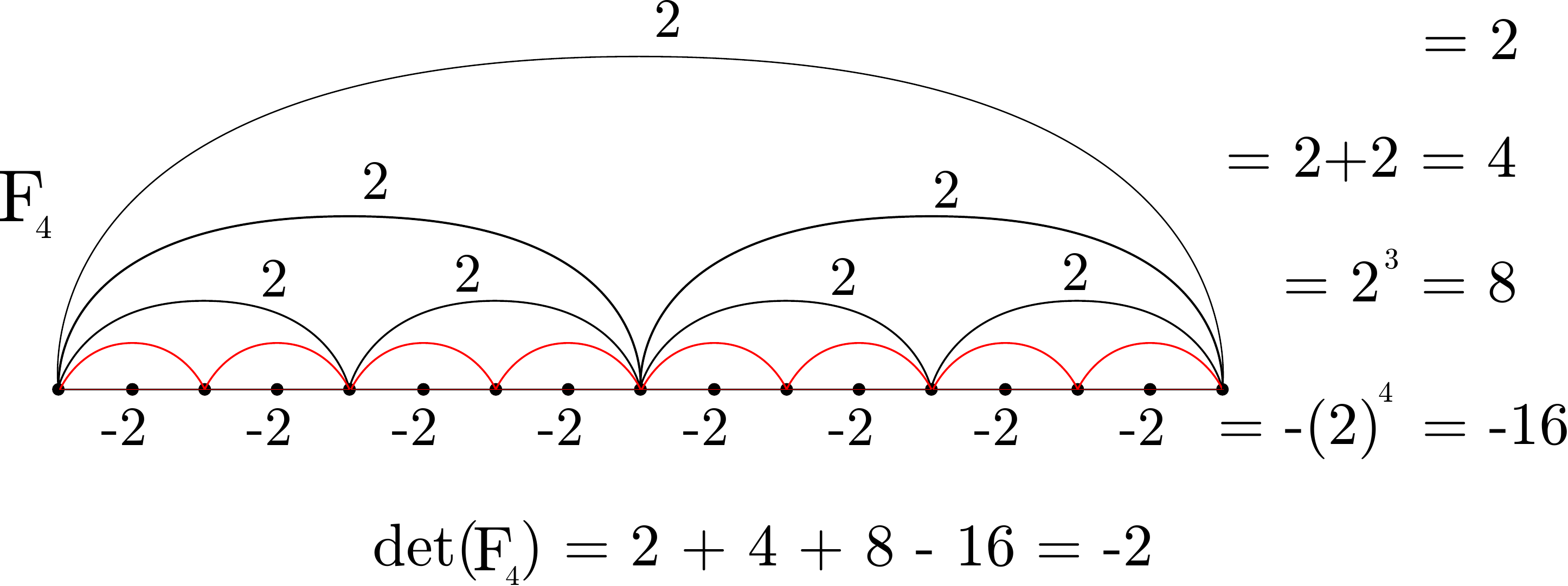}
\par\end{centering}
\protect\caption{\label{fig:Diagram-showing-each}Diagram showing each cycle's contribution
to $det(F_{n})$.}
\end{figure}

\begin{remark}
It can also be checked that, using similar arguments to the proof of Thm.~\ref{thm_ryan}, $det(F_1^k)=(-1)^{k+1}2k$ and that for $n=0$:
\[
det(F_{0}^k)=\begin{cases}
0 & k\text{ even}\\
(-1)^{\frac{k+1}{2}} & k\text{ odd}\\
\end{cases}
\]
\end{remark}

\section{$\mu>\mu_{\infty}$: Spectral properties of chaotic Feigenbaum graph ensembles}
\label{sec:chaos}

In this section we explore Feigenbaum graphs in the region $\mu>\mu_{\infty}$. As discussed in \cref{sec:intro_chaos}, for a given $\mu<\mu_{\infty}$, and for a given series size $N$, the resulting Feigenbaum graph was unique because the order in which the trajectory visits the stable branches of the periodic attractor is unique (indeed, it is universal for all unimodal maps, not just the logistic map, so $F_n^k$ are indeed universal \cite{Feig}). However, for $\mu>\mu_{\infty}$ this is no longer the case:  for a specific $\mu$, each initial condition will generate a priori a different chaotic trajectory, and hence a different Feigenbaum graph. Since in this case $n$ and $k$ do not apply anymore, we use the notation $F(\mu,N)$ to describe the ensemble of Feigenbaum graphs associated to a trajectory of size $N$ (so the corresponding HVG has $N$ vertices) generated by the logistic map with parameter $\mu$.

\subsection{Self-averaging properties of $\lambda_{\text{max}}$}
\label{sec:chaos1}

We start by exploring the self-averaging properties of the ensembles of Feigenbaum graphs. First, we fix $\mu=4$ (fully developed chaos) and extract an ensemble of 100 time series of series with $N= 2^{q}$ for $q=10,\dots,15$, each generated with a different initial condition. For each series, we then extract its Feigenbaum graph and calculate $\lambda_{\text{max}}$. For each time series size $N$, we compute the mean and standard deviation of the ensemble of $\lambda_{\text{max}}$. To assess whether this quantity self-averages as $N$ increases \cite{SA}, in the left panel of Figure \ref{fig:MeanStd} we plot the relative variance $R_\lambda$ as a function of $N$, defined as
$$R_\lambda(N)=\frac{\langle\lambda_{\text{max}}^2\rangle - \langle\lambda_{\text{max}}\rangle^2}{\langle\lambda_{\text{max}}\rangle^2},$$
where the average $\langle\cdot\rangle$ is performed over the ensemble of realisations.

 We observe that this quantity decreases with $N$, certifying that, for $\mu=4$, the largest eigenvalue is a self-averaging quantity. This means that with regards the largest eigenvalue, a {\it typical} realisation of $F(\mu=4, N)$ provides a faithful representation of the ensemble. Moreover, the relative variance scales as a power law $R_\lambda(N)= cN^{z}$ with $z\approx -0.221$, hence the system is {\it weakly} self-averaging (because we have $-1<z<0$).
 
A similar analysis is performed now for the whole range of values of $\mu>\mu_{\infty}$ for which the Lyapunov exponent ($\textsc{le}$) is positive (i.e., we discard periodic windows). In each case, a power law fit $R_\lambda(N)=cN^{z}$ is computed. In the right panel of Figure \ref{fig:MeanStd} we plot the estimated exponent $z(\mu)$. In most of the cases we find that the system remains weakly self-averaging. There is only one exception for this otherwise general behaviour: for a specific value of $\mu$ only slightly above $\mu_{\infty}$ ($\textsc{le}\approx 0+$) we find that $z>0$, i.e., the relative variance increases with $N$. This anomalous behaviour can be explained as follows: in the onset of chaos $\mu=\mu_{\infty}$, the Feigenbaum graph ensemble is still degenerate (i.e. only one unique configuration). As we enter into the chaotic region but remain very close to $\mu_{\infty}$, a  trajectory of the map will visit what is known as a ghost of the attractor found in the accumulation point. In fact, the structure of a realisation of a Feigenbaum graph just above the accumulation point is very similar to the one found at the accumulation point with just a few additional `chaotic' edges \cite{Feig}. The existence of these edges is what allows the ensemble in this case to no longer be degenerate. Now, the number of these chaotic edges will proportionally increase when the series size $N$ increases, simply because as $N$ increases the trajectory will show additional deviations from the ghost attractor. Accordingly, the total number of possible configurations of the ensemble of Feigenbaum graphs very close to the accumulation point increases from essentially one (degenerate case) when $N$ is small to many as $N$ increases. As a byproduct, the relative variance will necessarily increase as a function of $N$ in this case, hence $z>0$.

\begin{figure}
\begin{centering}
\includegraphics[scale=0.4]{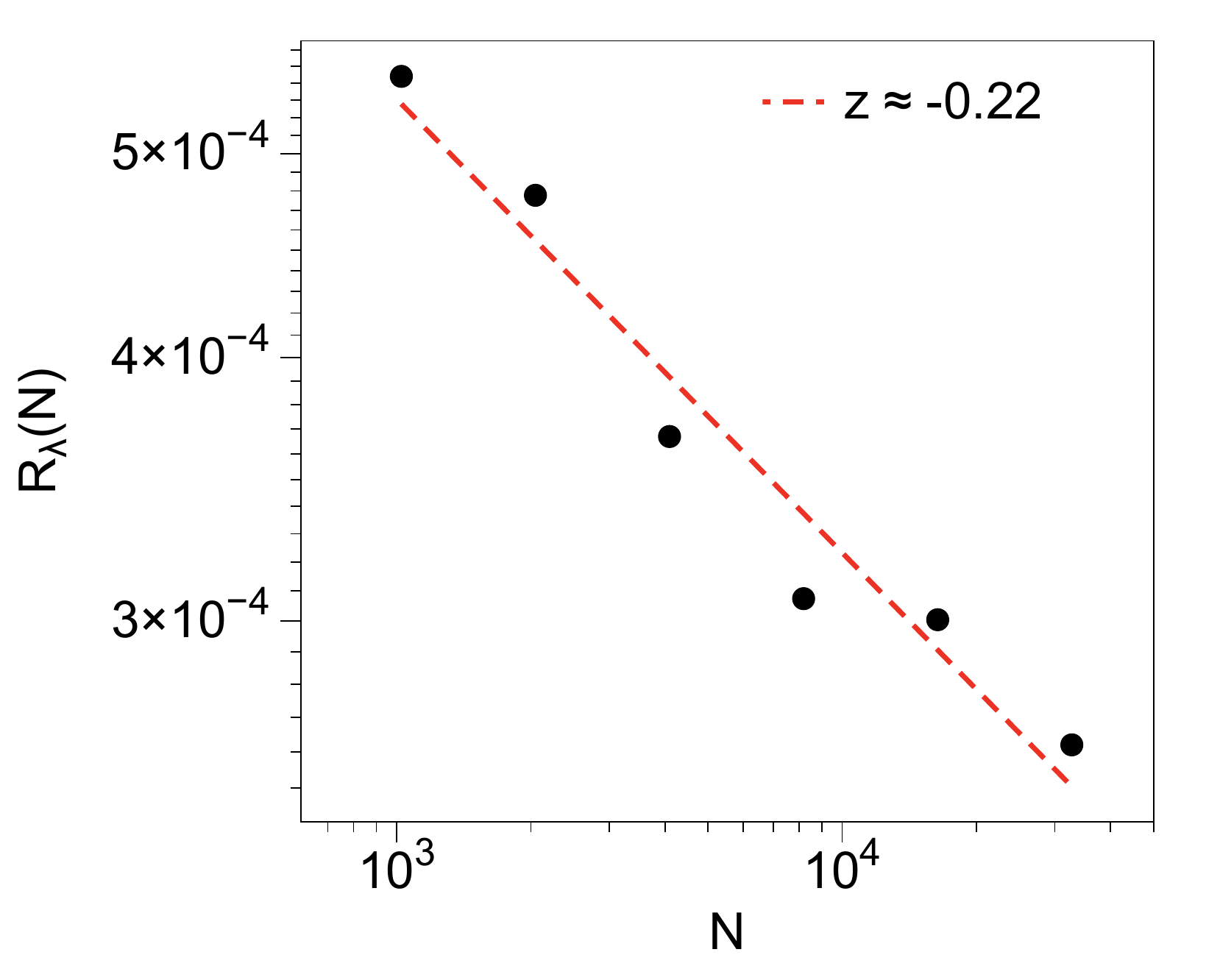}
\includegraphics[scale=0.4]{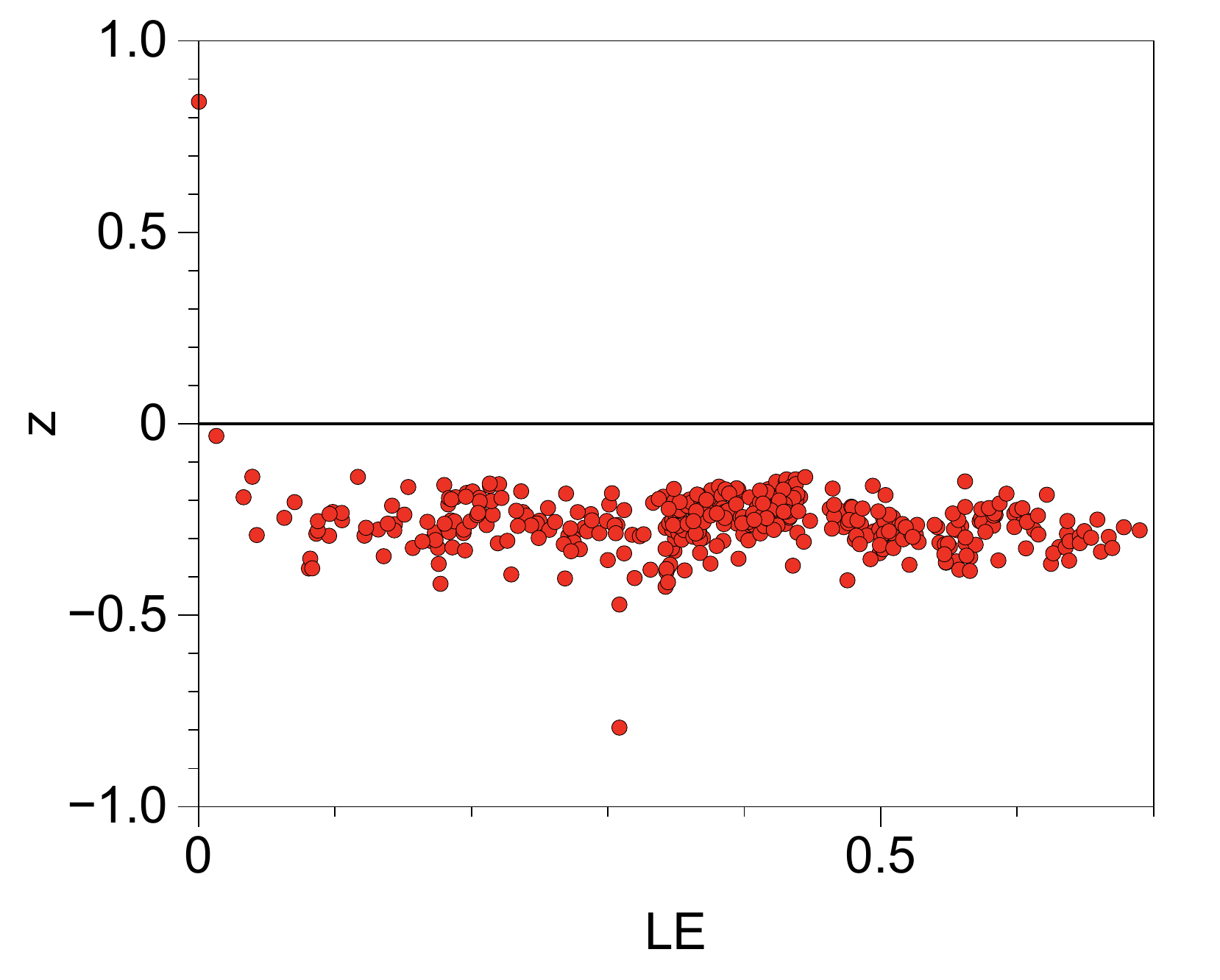}
\par\end{centering}
\protect\caption{\label{fig:MeanStd}(Left) Log-log Plot showing the relative variance $R_\lambda(N)$ as a function of size $N$, computed over an ensemble of 500 realisations of Feigenbaum graphs $F(4,2000)$. The curve is fitted by a power law $R_\lambda(N)= cN^{z}$, where the best fit provides $z\approx -0.22$, suggesting the system is weakly self-averaging ($-1<z<0$). (Right panel) Fitted exponent $z$ for the range of values of $\mu$ for which $\textsc{le}(\mu)>0$. In most of the cases we find $-1<z<0$, confirming that the system is weakly self-averaging. For $\textsc{le}(\mu)\approx 0+$ (which holds for $\mu$ only slightly above $\mu_{\infty}$, $z>0$. This can be explained in terms of the ghost structure present in the graphs (see the text).}
\end{figure}

\subsection{Searching spectral correlates of chaoticity}
\label{sec:chaos2}
One of the main motivations that leads us to explore the largest eigenvalue of HVGs is that some research claims that this is an informative quantity for the `complexity' of the associated time series (see for instance \cite{Fioriti, Ahmadlou2010, epilepsy, GIC2, GIC3}). If this was the case, we wonder if such quantity is able to quantify the `degree of chaoticity' of a given (chaotic) time series. Within the realm of nonlinear time series analysis, a relevant property that quantifies how chaotic a system is the \textit{sensitivity to initial conditions}, better described by the largest Lyapunov exponent of the system which accounts for the (exponential) separation rate of two initially nearby trajectories. For univariate time series extracted from a map $x_{i+1}=f(x_i)$, there is only one Lyapunov exponent \textsc{le}, which can be estimated from a single (long) time series as \cite{strogatz}
$$\textsc{le}={\displaystyle\lim _{N\to \infty }{\frac {1}{N}}\sum _{i=0}^{N-1}\log |f'(x_{i})|}$$
Thus, for each $\mu \in [\mu_{\infty},4]$ (sampled in steps of $\Delta \mu=0.001$) we have generated a single trajectory, and computed both the \textsc{le} and $\lambda_{\text{max}}$. In figure \ref{fig:ScatterLELambda} we show the scatter plot of $\lambda_{\text{max}}$ vs \textsc{le}. Surprisingly, no obvious correlation emerges in this picture, which suggests that $\lambda_{\text{max}}$ does not correlate to the sensitivity to initial conditions.\\

\begin{figure}
\begin{centering}
\includegraphics[scale=0.5]{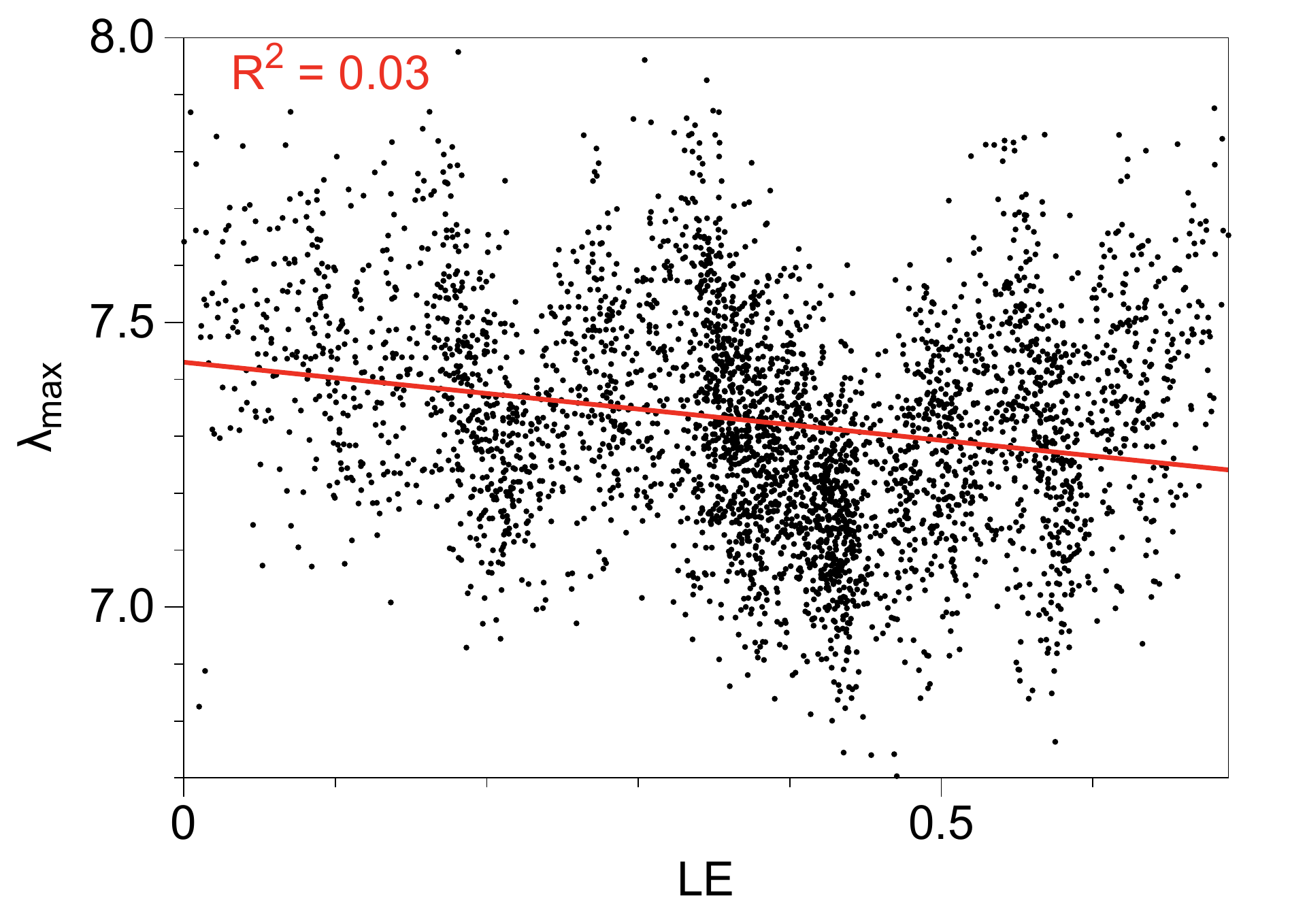}
\par\end{centering}
\protect\caption{\label{fig:ScatterLELambda}Scatter graph of the maximum eigenvalue $\lambda_{\text{max}}$ of $F(\mu,2000)$ vs the Lyapunov exponent $\textsc{le}(\mu)$, for values of $\mu\in [\mu_{\infty}$,4] in steps of $\Delta \mu=0.001$ (only positive values are selected to avoid periodic windows). No correlation emerges.}
\end{figure}

Does this mean that HVGs are not inheriting chaoticity properties, or that these are simply not inherited in $\lambda_{\text{max}}$? As a matter of fact, previous works have shown that the HVGs \textit{do} capture chaoticity, as $\textsc{le}(\mu)$ is very well approached (from above) by suitable block-entropies of the Feigenbaum graph's degree sequence \cite{wolfram}. So the question is whether the spectral properties of these graphs are able to capture such properties. We do not have a definite answer for this, but let us comment that we have checked scatter plots similar to Figure \ref{fig:ScatterLELambda} for other spectral properties, such as the graph's Von Neumann entropy \cite{severini_VN}, spectral gap or the (logarithmic) tree number, with similarly unsuccessful results (data not shown). Hence our partial conclusion is that spectral properties do not quantify different levels of chaoticity. The natural question is therefore: do these characterise chaos {\it at all}? To address this question, in the next and final section of the paper we will make a systematic comparison between the spectral properties of Feigenbaum graphs associated to chaotic series and those of generic HVGs associated to random uncorrelated series.




\subsection{Comparison with iid}
In \cref{sec:chaos2} we came to the conclusion that spectral properties don't seem to characterise (in a quantitative way) the chaoticity of the series. Hence the question: do they carry {\it qualitative} information, or on the contrary, spectral properties do not distinguish between chaotic series and random ones? If this was to be the case, the spectral properties shouldn't differ much from what we would find for random, uncorrelated series (iid).

\subsubsection{$\lambda_{max}$}
First let us note that  in \cite{Fioriti} the authors explored whether $\lambda_{\text{max}}$ could distinguish chaotic and random series, with interesting numerical evidence suggesting that indeed chaos can be distinguished from an iid process under this lens. As a cautionary note, observe however that their analysis was based on estimating $d_{\text{max}}$, as they claim that $\lambda_{\text{max}}\approx \sqrt{d_{\text{max}}}$ when $N\to \infty$. This is however not true in general (for a generic graph), and in the context of HVGs it is actually unknown. Also, they assumed that this quantity converged as the series size $N$ increases, and numerically checked this in a small interval of $N$. Note, however, that analytical results \cite{nonlinearity} suggest that $d_{\text{max}}$ is unbounded for both iid and chaotic processes as their degree distribution has an exponential tail (that is to say, in order to find a certain value for  $d_{\text{max}}$ one just needs to increase (exponentially) the series size $N$).

\begin{figure}[h]
\begin{centering}
\includegraphics[scale=0.5]{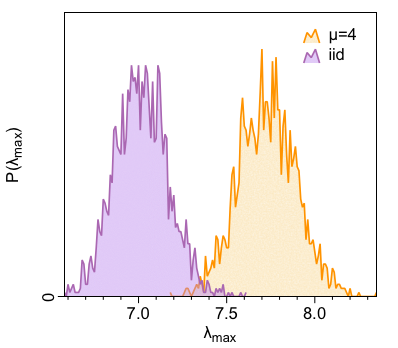}
\par\end{centering}
\protect\caption{\label{fig:ensemble} Ensemble histogram $P(\lambda_{\text{max}})$ for $\mu=4$ and iid (the histogram has been smoothed).}
\end{figure}

Does $\lambda_{\text{max}}$ converge as $N\to\infty$? Since both iid and chaotic series are aperiodic, from Eq.(\ref{periodic_k}) we get $\bar d=4$ in both cases. Furthermore, from \cite{PRE, nonlinearity} it is known that degree distribution for both infinite iid and a chaotic process such as the logistic map has an exponential tail, with $P(d)\sim \exp(-\gamma d)$. In particular, for $\mu=4$ a good approximation is $\gamma_{\mu=4}\approx \log(4/3)$, whereas for an iid process the exponential distribution is exact and $\gamma_{iid}=\log(3/2)$. Note, however that these expressions hold in the limit $N\to \infty$, where $d_{\text{max}}$ is unbounded (although it grows rather slowly with $N$) in both cases, suggesting that $\lambda_{\text{max}}$ is indeed unbounded in the limit $N\to \infty$. This is not unexpected, as $F(\mu,\infty)$ are not locally finite. For that reason, in order to assess whether $\lambda_{\text{max}}$ can indeed distinguish chaos from iid, we shall analyse finite trajectories ($N<\infty$). $d_{\text{max}}$ is therefore the largest possible degree of $F(\mu,N)$. Statistically speaking, we can state that $d_{\text{max}}$ is only reached once in the whole graph, and therefore $d_{\text{max}}$ should fulfil
$$P(d=d_{\text{max}})\cdot N = 1$$
A quick calculation yields
\begin{equation}
d_{\text{max}} \sim \frac{\log N}{\gamma}
\end{equation}
and according to Eqs.(\ref{bound_degree_eq1}) and $(\ref{bound_degree_eq2})$, we have for both iid and chaos:
\begin{equation}
\sqrt{d_{\text{max}}} \leq \lambda_{\text{max}}\leq \sqrt{d_{\text{max}}(d_{\text{max}}-1)}
\end{equation}
Interestingly, the difference between the chaotic case $(\mu>\mu_{\infty})$ and the random case (iid) is evident in $d_{\text{max}}$:
\begin{equation}
d_{\text{max}}^{\mu} = \frac{\gamma_\mu}{\gamma_{\text{iid}}} d_{\text{max}}^{\text{iid}} 
\end{equation}
which for $\mu=4$ becomes
$$d_{\text{max}}^{\mu=4}\approx 1.4\cdot d_{\text{max}}{\text{iid}} $$

We fix $N=2000$ and compute $\lambda_{\text{max}}$ for iid and $\mu=4$ over 2000 realisations. We plot the resulting histograms are in Figure \ref{fig:ensemble}, finding $\langle \lambda_{\max}^{\text{iid}} \rangle=7.01\pm 0.15$,  and $\langle \lambda_{\max}^{\mu=4} \rangle=7.73\pm 0.16$. The two quantities are clearly different.\\

We now assess whether $\lambda_{\max}$ of an ensemble of logistic maps is systematically different than the same quantity obtained from iid. To do this, we consider all values of $\mu$ for which $\textsc{le}(\mu)>0$ and for each of these values, we have performed a 2-sampled t-test between $\langle \lambda_{\max}^{\mu} \rangle$ and $\langle \lambda_{\max}^{\text{iid}} \rangle$, and obtained a p-value for each test. We systematically find very small p-values, concluding that $\langle \lambda_{\max}\rangle$ can indeed distinguish time series extracted from the whole chaotic region from a purely random process.

\begin{figure}[h]
\begin{centering}
\includegraphics[scale=0.4]{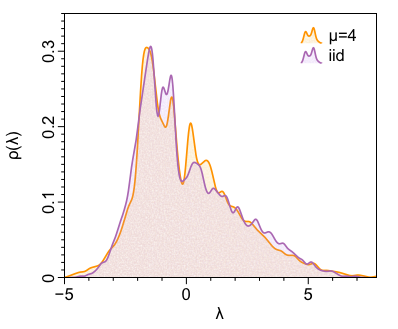}
\includegraphics[scale=0.4]{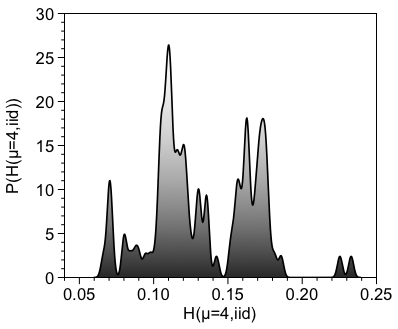}
\par\end{centering}
\protect\caption{\label{fig:HistogramEigenvalues}(Left Panel) Histogram showing the distribution of eigenvalues $\rho(\lambda)$ for the Feigenbaum associated to the logistic map ($\mu=4$) in orange (time series of $N=2000$), versus the one associated to an iid time series of the same size in purple. To help the eye distinguish both distributions, a smoothing has been applied. (Right Panel) Ensemble distribution of the Hellinger distance $H(\mu=4, \text{iid})$ between $\rho(\lambda)$ for $\mu=4$ and an iid process, for a total of 100 realisations. The mean of the ensemble is $\langle H(\mu=4,\text{iid}) \rangle=0.13\pm 0.03$ }
\end{figure}

\subsubsection{Distribution of eigenvalues}
To round off our analysis, we now compare the distribution of eigenvalues in the chaotic case to the one obtained for random iid time series of the same size. We start with $\mu=4$. 
We extract a time series of size $N=2000$ for each process, compute the list of eigenvalues and display their frequency $\rho(\lambda)$ in a histogram. These are shown in the left panel of Fig. \ref{fig:HistogramEigenvalues}.  We observe that the distribution is somewhat different for specific ranges. To quantify `how different' they are, we compute the Hellinger distance, defined as
$$H(p,q)=\sqrt{1-\sum_x \sqrt{p(x)\cdot q(x)}},$$
where $p(x)$ and $q(x)$ are two sample distributions. After an ensemble average over 100 realisations, the average Hellinger distance between $\mu=4$ and iid is $H(\mu=4,\text{iid})=0.13\pm 0.03$ (see the right panel of Fig. \ref{fig:HistogramEigenvalues} for the ensemble distribution of Hellinger distances).\\

Finally, we explore the distance for $\mu \in [\mu_{\infty},4]$. A scatter plot of $H(\mu, \text{iid})$ vs $\textsc{le}(\mu)$, for those values for which the Lyapunov exponent is positive is shown in figure \ref{fig:last}. Unexpectedly, a clear negative correlation emerges between $H(\mu, \text{iid})$ and $\textsc{le}(\mu)$. The best linear fit is $H(\mu, \text{iid}) \approx 0.21627 - 0.23208\textsc{le}(\mu)$. While a sound theoretical justification for this negative correlation is left for future work, heuristically one can say that the larger the Lyapunov exponent, the more chaotic the time series is and thus the less easy is to distinguish the spectrum of the associated Feigenbaum graph from the one generated from a random series.

\begin{figure}[h]
\begin{centering}
\includegraphics[scale=0.5]{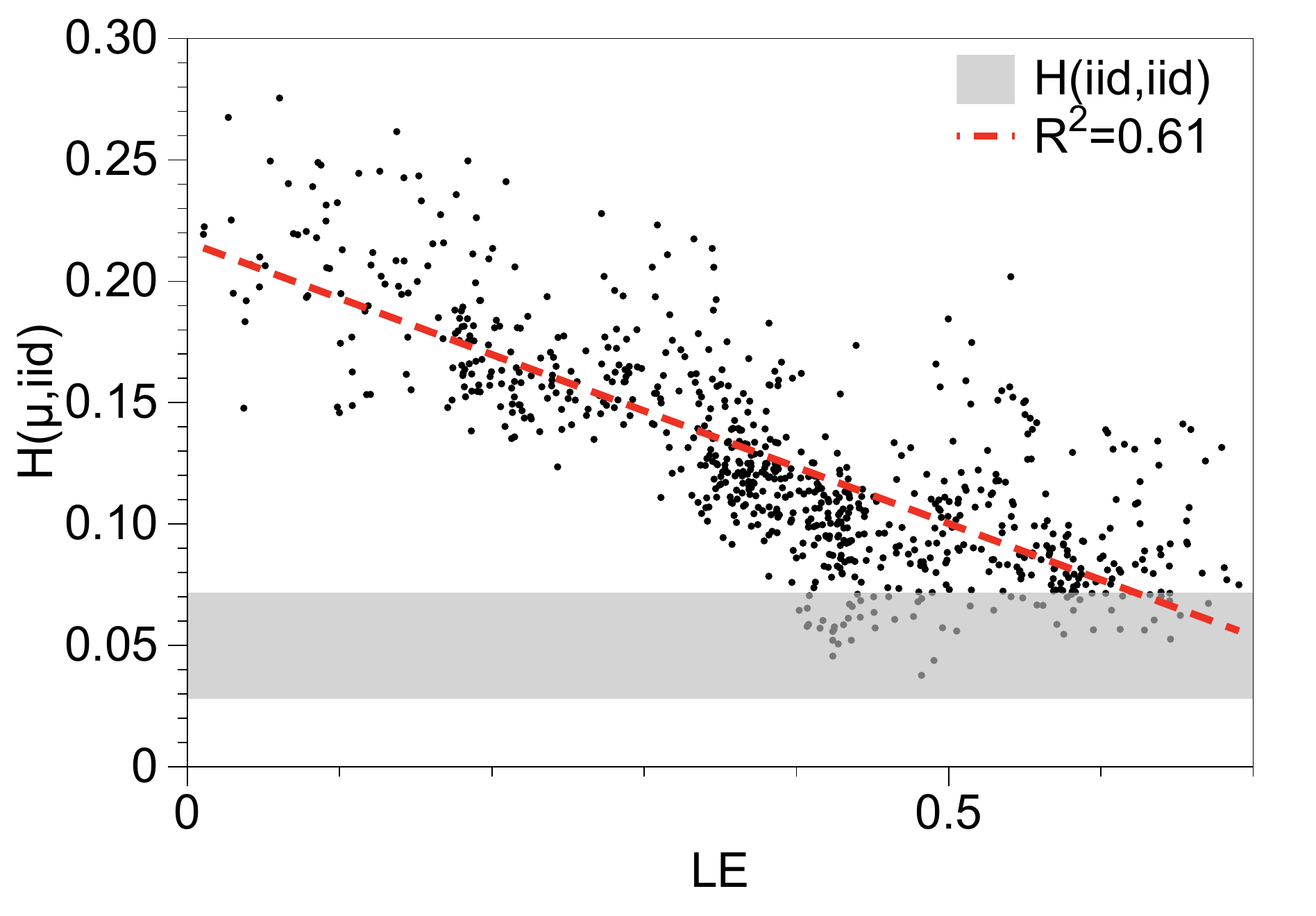}
\par\end{centering}
\protect\caption{\label{fig:last}Scatter plot of the Hellinger distance $H(\mu,\text{iid})$ between the eigenvalue distribution of the Feigenbaum graph $F(\mu, 2000)$ and the one associated to a random iid time series of the same size as a function of the Lyapunov exponent $\textsc{le}(\mu)$, for those values $\mu \in [\mu_{\infty},4]$ for which the Lyapunov exponent is positive (sampling $\Delta \mu=5\cdot10^{-4}$). For comparison, $H(\text{iid},\text{iid})$ is also shown (the gray area denotes $\langle H(\text{iid},\text{iid}) \rangle \pm std=0.05\pm 0.02$). This area denotes the range for which distributions cannot be distinguished.
A clear negative correlation between the Hellinger distance to iid and the Lyapunov exponent emerges.}
\end{figure}

\section{Discussion}
\label{sec:conclusion}
Horizontal Visibility Graphs (HVGs) have been widely used as a method to map a time series into a graph representation, with the aim of performing graph-based time series analysis and time series classification. Among other properties, the Graph Index Complexity --(GIC), a rescaled version of the maximal eigenvalue of the HVG's adjacency matrix-- has been used as a network quantifier in several applications. However, there is a shortage of theoretical analysis of the spectral properties of HVGs, as most works essentially deal with applications of GIC for real-world time series classification.\\
Here we make the first step to partially fill this gap by addressing the spectral properties of HVGs associated to certain classes of periodic and chaotic time series. For convenience, we focus on the archetypal logistic map as it is a canonical system producing periodic time series of different periods and chaotic time series with different degrees of chaoticity (i.e, different Lyapunov exponent) as it undergoes the Feigenbaum scenario.\\ We were able to enumerate the visibility graphs below the map's accumulation point in terms of a bi-parametric family of finite Feigenbaum graphs $F_n^k$, and have explored their spectral properties (in particular, the behaviour of the maximal eigenvalue of the adjacency matrix) as a function of $n$ and $k$. We found noteworthy patterns, and numerical results were complemented with analytical developments as well as exact results. Other aspects that were investigated include the full spectrum, the determinant, the number of distinct eigenvalues, and the number of spanning trees of the whole family of $F_n^k$.\\
A similar analysis was then conducted in the region of the map's parameter where trajectories are chaotic, finding that the maximal eigenvalue, while being a good discriminator between chaos and noise, is not able to quantify chaoticity. The eigenvalue distribution, on the other hand, was found to carry information about time series chaoticity, in particular its Lyapunov exponent.\\
In this work we have also outlined a number of conjectures and open problems which we hope will trigger some attention in the algebraic and spectral graph theory community.

\begin{appendix}

\section{Walks of $F_n$}\label{app:walks}

\subsection{Maximum 2-walks}
In Section.~\ref{sec:gelfand} we use the result $\Vert({\bf A}_n)^2\Vert_{\infty}=2n^2+2$, and we prove it here. Note that $\Vert({\bf A}_n)^2\Vert$ is the maximum of the number of 2-walks originating at a node, over all the nodes. It is clear that this node is the central node, which we will call $v_c$, which has degree $2n$ (as we prove in Prop.~\ref{prop:basic}). Also note that to count the number of 2-walks originating from $v_c$, we can count the total degree of the neighbours of $v_c$. We can observe that apart from the boundary (left and right) nodes, which have degree $n+1$, the degrees of the neighbours of the $v_c$ are $2, 4, 6,\dots,n-1$ (and these are counted twice). Summing up all these degrees we have
\begin{align*}
\Vert({\bf A}_n)^2\Vert &= 2(n+1)+\sum_{k=1}^{n-1}2k\\
				      &= 2(n+1)+2 \cdot 2 \cdot \frac{n(n-1)}{2}\\
					  &= 2(n+1)+2n(n-1)\\
					  &= 2n^2 + 2
\end{align*}

\subsection{Coefficients}
In section.~\ref{sec:walkbound} we defined $a(n)$, $b(n)$, $c(n)$ and $d(n)$ to be the total number of 3-walks, 2-walks, 1-walks and 0-walks respectively. We state that 
\begin{eqnarray}
&&b(n) = 24\cdot 2^n - 2n^2 - 12n - 22 \nonumber \\    
&&c(n) = 4\cdot 2^n - 2  \nonumber \\ 
&&d(n) = 2^n + 1. \nonumber  
\end{eqnarray}
Observe that the number of 0-walks, $d(n)$, is the number of nodes, which is equal to $2^n+1$. The number of 1-walks, $c(n)$, is twice the number of edges, and is equal to $2(2^{n+1}-1) = 4 \cdot 2^n-2$.

Reaching the formula for $b(n)$ is a little trickier. First define $I_n^1$ to be the $n \times n$ matrix with zeros everywhere, except a 1 in the top right entry. Similarly define ${}_{1}I_n$ to to be the $n \times n$ matrix with zeros everywhere, except a 1 in the bottom left entry. Notice that
\begin{equation}\label{walkeq}
{\bf A}_n={\bf A}_{n-1}^2 + I_n^1 + {}_{1}I_n
\end{equation}
where ${\bf A}_{n-1}^2$ is the adjacency matrix of $F_{n-1}^2= F_{n-1} \oplus F_{n-1}$ as in Definition~\ref{concatenation}. The quantity we wish to find is $\sum_{i,j}({\bf A}_{n-1}^2)^2$. We plot a visualisation of the matrix $({\bf A}_{n-1}^2)^2$ in Figure.~\ref{fig:walkproof}. The source of the top left and bottom right blocks $({\bf A}_{n-1})^2$ should be clear by studying the form of the matrix ${\bf A}_{n-1}^2$. A matrix $\bf C$ appears in the top right (with its transpose in the bottom left), and has size $2^{n-1} \times 2^{n-1}$. The origin of this matrix is not immediately clear, however it is the matrix obtained when the central row vector of ${\bf A}_{n-1}^2$ hits itself under squaring. The vector has sum $2n$ (recall the degree of the central vertex), but only half of the vector hits itself when creating $\bf C$, so we will only consider the first $2^{n-1}$ values of this central vector, and we will call it $v_c$. Thus $\bf C$ is the matrix where the $n$th row vector is $v_c$ if the $n$th value of $v_c$ is 1, and is zero otherwise (or rather, a vector of zeros of length $2^{n-1}$), or equivalently 
\[
{\bf C}=\underbrace{(v_c^\intercal|v_c^\intercal|\cdots|v_c^\intercal)}_{2^{n-1}}.
\]
The sum of $v_c$ is $n$, so we have that $\sum_{i,j}{\bf C}=\sum_{i,j}{\bf C^\intercal}=n^2$. Going back to Equation.~\ref{walkeq} we have
\begin{align*}
({\bf A}_n)^2 &= ({\bf A}_{n-1}^2 + I_n^1 + {}_{1}I_n)^2\\
		&= ({\bf A}_{n-1}^2)^2 + {\bf A}_{n-1}^2 \cdot I_n^1 + {\bf A}_{n-1}^2 \cdot {}_{1}I_n + I_n^1 \cdot {\bf A}_{n-1}^2\\
		& + (I_n^1)^2 + I_n^1 \cdot {}_{1}I_n + {}_{1}I_n \cdot {\bf A}_{n-1}^2 + {}_{1}I_n \cdot I_n^1 + ({}_{1}I_n)^2
\end{align*}
Summing this quantity over $i$ and $j$, the first term gives a contribution of twice that of ${\bf A}_{n-1}^2$ (see Figure.~\ref{fig:walkproof}) and twice the sum of $\bf C$. The terms involving ${\bf A}_{n-1}^2$ and either $I_n^1$ or ${}_{1}I_n$ give us a contribution of $n$, as these vectors extract the top, bottom, left and right row/column vectors of ${\bf A}_{n-1}^2$ and these have sum $n$ (recall the degree of the left or right boundary nodes is $(n-1)+1=n$). The terms $(I_n^1)^2$ and $({}_{1}I_n)^2$ have sum $0$ but the terms $I_n^1 \cdot {}_{1}I_n$ and ${}_{1}I_n \cdot I_n^1$ have sum $1$ each. Putting this together we have
\[
\sum_{i,j}({\bf A}_n)^2=2\cdot\sum_{i,j}({\bf A}_{n-1})^2 + 2n^2 + 4n + 2
\]
and writing $\sum_{i,j}({\bf A}_n)^2=b(n)$ we have a recurrence relation
\[
b(n)=2 \cdot b(n-1)+2n^2+4n+2
\]
We have that $b(0)=2$, hence this can be solved and we get
\[
b(n) = 24\cdot 2^n - 2n^2 - 12n - 22
\]
completing the proof. Constructing a similar proof for the 3-walks $a(n)$ could be possible but we were not able to. However, we can guess that $a(n)$ is of the form $m\cdot2^n+g(n)$ where $g(n)$ is a polynomial in $n$ and $m$ is an integer. Calculating $a(n)$ directly for enough values of $n$ we can estimate the coefficients, and indeed we do find integer coefficients with
\[
a(n) = 160\cdot2^n - 4n^3 - 30n^2 - 104n - 158
\]
We can extend this analysis by trying to create formulas for the 0, 1, 2, and 3-walks of $F_n^k$ which we define as $d(n,k)$, $c(n,k)$, $b(n,k)$ and $a(n,k)$ respectively. We immediately get $d(n,k)$ and $c(n,k)$ from the definitions. We can estimate the other two by proceeding with the same method as for $a(n)$ by guessing the general form of $b(n,k)$ and $a(n,k)$ to be $m\cdot2^n+g(n)+ k \cdot h(n)$ where $m$ is an integer and $g(n)$ and $h(n)$ are polynomials. The new contribution of $h(n)$ comes from adding $k$ copies of a certain number of walks. This yields
\begin{eqnarray}
&&a(n,k) = (320\cdot2^n - 4n^3 - 48n^2 - 196n - 312) + (k - 2)(160\cdot2^n - 16n^2 - 88n - 152) \nonumber \\  
&&b(n,k) = 24\cdot 2^n - 2n^2 - 12n - 22 + (k - 1)(24\cdot2^n - 8n - 20)\nonumber \\    
&&c(n,k) = 4k\cdot2^n - 2k  \nonumber \\ 
&&d(n,k) = k\cdot2^n + 1. \nonumber  
\end{eqnarray}

\begin{figure}[h]
\begin{centering}
\includegraphics[scale=0.8]{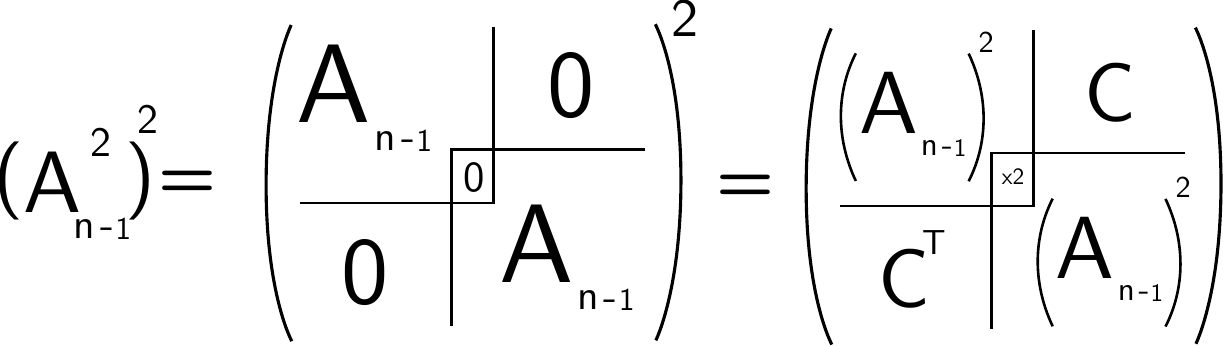}
\par\end{centering}
\protect\caption{\label{fig:walkproof}Diagram of the matrix representation of $(F_{n-1}^2)^2$ in terms of the matrices ${\bf A}_{n-1}$. The middle entry is the sum of the bottom right entry and top left entry of $({\bf A}_{n-1})^2$ (or equivalently twice the either entry as the matrix is symmetric). An extra matrix $C$ appears in the top right and bottom left blocks, whose entries sum to $n^2$, as explained in the text.}
\end{figure}

\end{appendix}

\noindent {\bf Acknowledgments. }{LL acknowledges funding from EPSRC Early Career Fellowship EP/P01660X/1. RF acknowledges doctoral funding from EPSRC.}


\end{document}